\documentclass[12pt, draftclsnofoot, onecolumn]{IEEEtran}
\usepackage{amsfonts}
\usepackage{slashbox}
\usepackage{amsmath}
\usepackage{multirow}
\usepackage{graphicx}
\usepackage{amsfonts}
\usepackage{amsmath,epsfig}
\usepackage{mathbbold}
\usepackage{stfloats}
\usepackage{amssymb}
\usepackage{url}
\usepackage{bm}
\usepackage{cite}
\usepackage{amsmath}
\usepackage{amssymb}
\usepackage{latexsym}
\usepackage{multirow}
\usepackage{epsfig}
\usepackage{graphics}
\usepackage{mathrsfs}
\usepackage{algorithmic}
\usepackage{algorithm}
\usepackage{subfigure}
\usepackage{slashbox}
\usepackage[all]{xy}

\usepackage{pifont}
\usepackage{bbding}
\usepackage{stmaryrd}
\usepackage{amssymb}
\usepackage{amsfonts}
\usepackage{epic}
\usepackage{stfloats}
\usepackage{latexsym}
\usepackage{epstopdf}
\usepackage{epic}
\usepackage{bm}
\usepackage{xcolor}
\usepackage{mathrsfs}
\usepackage{pifont}
\usepackage{bbding}
\usepackage{amsmath,epsfig}
\usepackage{mathbbold}
\usepackage{stmaryrd}
\usepackage{amssymb}
\usepackage{amsfonts}
\usepackage{epic}
\usepackage{graphicx}
\usepackage{subfigure}

\usepackage{enumerate}

\usepackage{stfloats}
\usepackage{latexsym}
\usepackage{epstopdf}
\usepackage{epic}
\usepackage{multirow}
\usepackage{stfloats}
\usepackage{bm}
\usepackage{amsmath}
\usepackage{amssymb}
\usepackage{amsthm}

\usepackage{booktabs}
\usepackage{color}
\usepackage{cases}
\usepackage{array}
\usepackage{makecell}
\usepackage{times}

\newtheorem{proposition}{Proposition}

\newcommand{\Ss}{\bm S}
\newcommand{\G}{\bm G}
\newcommand{\HH}{\bm H}
\newcommand{\R}{\bm R}
\newcommand{\V}{\bm V}

\newcommand{\I}{\bm{I}}

\newcommand{\W}{\bm W}
\newcommand{\PP}{\bm P}

\newcommand{\Mm}{\bm{M}}
\newcommand{\x}{ \bm{x} }

\newcommand{\ttheta}{\mathbf \Theta}

\newcommand{\aaa}{\bm{a}}

\newcommand{\vv}{\bm v}

\newcommand{\g}{\bm g}
\newcommand{\h}{\bm h}

\graphicspath{{./figs/}}

\begin{document}
\title{Intelligent Reflecting  Surface Enhanced Wireless Network via Joint Active and
Passive Beamforming}

\author{\IEEEauthorblockN{Qingqing Wu,  \emph{Member, IEEE} and Rui Zhang, \emph{Fellow, IEEE}
\thanks{ The authors are with the Department of Electrical and Computer Engineering, National University of Singapore, email:\{elewuqq, elezhang\}@nus.edu.sg. Part of this work has been presented in  \cite{wu2018IRS}.}   }  }

\maketitle
\textheight 9in
 \voffset -0.1in
\vspace{-0.6in}
\begin{abstract}
Intelligent reflecting surface (IRS) is a  revolutionary and transformative   technology for achieving spectrum  and energy efficient wireless communication  cost-effectively in the future.
Specifically, an IRS consists of a large number of low-cost passive elements each being able to reflect  the incident signal independently with an adjustable phase shift so as to collaboratively achieve three-dimensional (3D) passive beamforming without the need of any transmit radio-frequency (RF) chains.
{In this paper, we study an IRS-aided single-cell  wireless system where one IRS is deployed to assist in the communications between  a multi-antenna access point (AP) and multiple single-antenna users.}
We formulate and solve new problems  to minimize the total transmit power at the AP by jointly optimizing the transmit beamforming by active antenna array at the AP and reflect beamforming by passive phase shifters at the IRS, subject to users' individual signal-to-interference-plus-noise ratio (SINR) constraints.  
    {Moreover, we analyze the asymptotic performance of  IRS's passive beamforming with infinitely large  number of reflecting elements and compare it to that of  the traditional active beamforming/relaying.
 Simulation results demonstrate the significant performance gain achieved by the proposed scheme with IRS over a benchmark massive MIMO system  without using IRS. We also draw useful insights into optimally deploying IRS in future wireless systems.}
\end{abstract}
\vspace{-0.5cm}
\begin{IEEEkeywords}
Intelligent reflecting surface,  joint active and passive beamforming, phase shift  optimization.
\end{IEEEkeywords}
\vspace{-0.5cm}
\section{Introduction}
  To achieve 1,000-fold network capacity increase and ubiquitous wireless connectivity for at least 100 billion devices  in the forthcoming fifth-generation (5G)  networks, a variety of wireless technologies have been proposed and thoroughly investigated in the last decade, including most prominently the ultra-dense network (UDN), massive multiple-input multiple-output (MIMO), and  millimeter wave (mmWave) communication \cite{boccardi2014five}. However, the network energy consumption and hardware cost still remain critical issues  in practical systems  \cite{zhang2016fundamental}. For example, UDNs almost linearly scale up the circuit and cooling energy consumption with the number of  deployed  base stations (BSs), while costly radio frequency (RF) chains  and complex signal processing are needed for achieving high-performance  communication at mmWave frequencies, especially when massive MIMO is employed to exploit the small wavelengths.  Moreover, adding an excessively large number of  active components such as small-cell BSs/relays/remote radio heads (RRHs)  in wireless networks also causes a more aggravated  interference issue. As such,  innovative research on finding both spectrum and energy efficient techniques with low hardware cost is still imperative for realizing a sustainable wireless network evolution with scalable cost in the future  \cite{wu2016overview}.

In this paper, intelligent reflecting surface (IRS) is proposed as a promising new  solution to achieve the above goal \cite{JR:wu2019IRSmaga,JR:wu2019discreteIRS,JR:wu2019SWIPT:IRS,JR:xinrong:IRS}.  
Specifically, IRS is a planar array consisting of  a large number of reconfigurable passive elements (e.g., low-cost printed dipoles), where each of the elements is able to induce a certain phase shift  (controlled by an attached smart controller) independently on the incident signal, thus collaboratively changing the reflected signal propagation. Although passive reflecting surfaces have found a variety of applications in radar systems, remote sensing, and satellite/deep-space communications, they were rarely used in mobile wireless communication. This is mainly because traditional reflecting surfaces only have fixed phase shifters once  fabricated, which are unable  to cater to the dynamic wireless channels arising from user mobility. However, recent advances in RF micro electromechanical systems (MEMS) and metamaterial  (e.g., metasurface)  have made the reconfigurability of reflecting surfaces possible, even by controlling the phase shifters  in real time \cite{cui2014coding}. By smartly adjusting the phase shifts of all passive  elements at the IRS, the reflected signals  can add coherently with the signals from other paths  at the desired receiver to boost the received signal power or destructively at non-intended receivers to suppress  interference as well as  enhancing security/privacy \cite{JR:wu2019IRSmaga,JR:xinrong:IRS}.

\begin{table*}[!t]
{{\caption{Comparison of IRS with other related  technologies.}\label{table1}
\vspace{-0.3cm}
\tiny
\centering
 \linespread{1.2}
\small
\begin{tabular}{|m{2.25cm}|m{2.3cm}|m{1.6cm}|m{2.0cm}|m{1.4cm}|m{2.0cm}|m{1.7cm}|}
  \hline
{\!Technology} &{Operating ~~~~~mechanism}&  {Duplex} &  No.  of transmit RF   chains  needed &{Hardware cost} &{Energy} \textcolor{white}{xx} {consumption} & {Role} \\ \hline
{IRS}                 &Passive, ~~~~~~~~~~~~~ reflect &  {Full duplex}  & $0$ & Low &  Low &   Helper     \\ \hline
{Backscatter}&  Passive, ~~~~~~~~~~~~ reflect &  {Full duplex}  & $0$ &Very low& Very low   &  Source    \\ \hline
{MIMO  relay}                 & Active, ~~~~~~~~~~~~~~~receive and\textcolor{white}{xxxx} transmit  & {Half/full duplex} & $N$ &High & High  &   Helper   \\ \hline
{Massive MIMO}&  Active, ~~~~~~~~~~transmit/receive&  {Half/full duplex} & $N$ &Very high& Very high  & {Source/   Destination}    \\ \hline
\end{tabular}  }\vspace{-0.7cm} }
\end{table*}

{It is worth noting that  the proposed IRS differs significantly from other related existing technologies such as amplify-and-forward (AF) relay, backscatter communication, and active intelligent surface based massive MIMO \cite{JR:wu2019IRSmaga}.
First,  compared to the AF relay that assists in source-destination transmission by amplifying and regenerating signals, IRS does not use a transmitter module but only reflects the received  signals  as a passive array, which thus incurs no transmit power consumption.\footnote{{Although using devices like MEMs as mentioned previously to adjust the phase shifts at the IRS requires  some power consumption, it is practically negligible as compared to the much higher transmit power of active communication devices.}} Furthermore, active AF relay usually operates in half-duplex (HD) mode and thus is less spectrally efficient than the proposed IRS operating in full-duplex (FD) mode. Although AF relay can also work in FD,  it inevitably suffers from the severe self-interference, which needs effective  interference cancellation techniques.
Second, different from the  traditional backscatter communication of the radio frequency identification (RFID)  tag that communicates with the receiver by reflecting the signal  sent from a reader, IRS is utilized mainly  to enhance the existing communication link performance instead of delivering its own information by reflection. As such,  the direct-path signal (from reader to receiver) in backscatter communication is  undesired interference and hence needs to be canceled/suppressed at the receiver.  However, in IRS-enhanced communication, both the direct-path and reflect-path signals carry the same useful information and thus can be   coherently added at the receiver to maximize the total received power. Third, IRS is also different from the active intelligent surface based massive MIMO \cite{hu2017beyond} due to their different array architectures (passive versus active) and operating mechanisms (reflect versus transmit).  {A more detailed comparison between the above technologies and IRS  is summarized in Table \ref{table1},  where $N$ denotes the number of  active antennas in massive MIMO or MIMO relay.} }

On the other hand, from the implementation  perspective, IRSs possess appealing  advantages such as low profile, lightweight, and conformal geometry, which enable them to be easily attached/removed to/from the wall or ceiling, thus providing high flexibility for their practical deployment  \cite{subrt2012intelligent}.
For example, by installing  IRSs on the walls/ceilings which are in line-of-sight (LoS) with an access point (AP)/BS,  the signal strength  in the vicinity of each IRS can be significantly improved. In addition, integrating IRSs into the existing networks (such as cellular or WiFi)  can be made transparent to the users without the need of any change in  the hardware and software of their  devices.
All the above features make  IRS a compelling new technology for future wireless networks, particularly in indoor applications with high density of users (such as  stadium, shopping mall, exhibition center, airport, etc.).
 To validate the feasibility of IRS, an experimental testbed for a two-user setup was developed \cite{tan2016increasing}, where the spectral efficiency is shown to be greatly improved by using the IRS. However, the research on IRS design as well as the performance analysis and optimization for IRS-aided wireless communication  systems is still in its infancy, which thus motivates this work.

{In this paper, we consider an IRS-aided  multiuser multiple-input single-output (MISO) communication  system in a single cell  as shown in Fig. \ref{system:model},  where a multi-antenna AP serves multiple single-antenna users with the help of an IRS.}   Since each user in general  receives the superposed (desired as well as interference) signals from both the AP-user (direct) link and AP-IRS-user (reflected) link,  we jointly optimize the (active) transmit beamforming at the AP and (passive) reflect beamforming by the phase shifters at the IRS to minimize the total transmit power at the AP, under a given set of signal-to-interference-plus-noise ratio (SINR) constraints at the user receivers.  For the special  case of single-user transmission without any interference,  it is intuitive that  the AP should  beam toward the user directly if the channel of the AP-user link is much stronger than that of the AP-IRS link; while in the opposite case, especially when the AP-user link is severally blocked by obstacles (e.g., thick walls  in indoor applications), the AP ought to  adjust its beam toward the IRS to maximally leverage its reflected signal to serve the user (i.e., by creating  a virtual LoS  link with the user to bypass  the obstacle).  In this case, a large number of reflecting elements with adjustable phases  at the IRS can focus the signal  into a sharp beam toward the user, thus achieving a high beamforming gain similarly as by the conventional massive MIMO \cite{Hien2013}, but only via a passive array with significantly reduced energy consumption and hardware cost.

Moreover, under the general multiuser setup, an IRS-aided  system will
  \begin{figure}[!t]
\centering
~~~~~~~~~\includegraphics[width=0.85\textwidth]{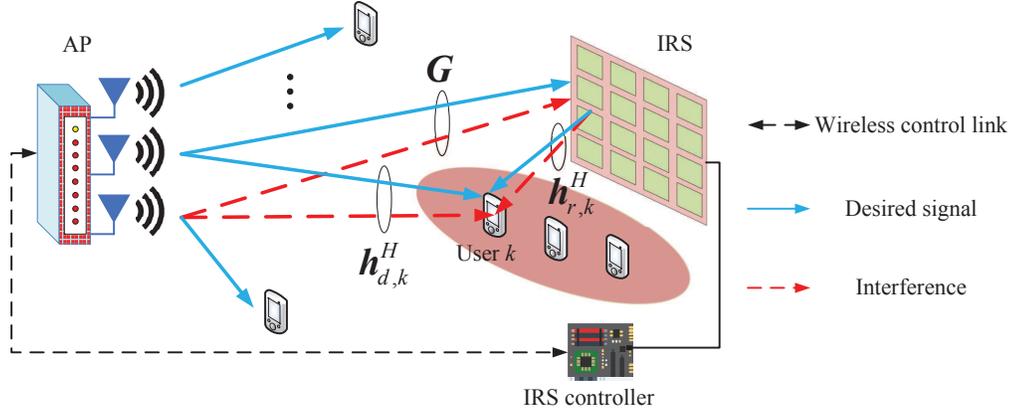}\vspace{-0.6cm}
\caption{An IRS-aided  multiuser  communication  system.    } \label{system:model} \vspace{-0.8cm}
\end{figure}
 benefit from two main aspects: the beamforming  of desired signal as in the single-user case as well as the spatial interference suppression among  the  users. Specifically, a user near the IRS is expected to be able to tolerate more  interference from the AP as compared to the user farther away from the IRS,  because the phase shifts of  the IRS can be  tuned such that the interference reflected by the IRS can add destructively with that from the AP-user link at the near user to suppress  its overall received  interference. This thus provides more flexibility  for designing the transmit beamforming at the AP for serving  the other users outside the IRS's covered region, so as  to improve the SINR performance of all users in the system.  Therefore, the transmit beamforming  at the AP needs to be jointly designed with the phase shifts at the IRS based on all the AP-IRS, IRS-users, and AP-users channels  in order to fully reap the network beamforming  gain.  However, this  design  problem is difficult to be solved optimally in general,  due to the non-convex SINR constraints as well as the signal unit-modulus constraints imposed by passive phase shifters.  Although beamforming optimization under unit-modulus constraints has been studied in the research on  constant-envelope precoding \cite{mohammed2012single,zhang2018constant} as well as  hybrid digital/analog processing  \cite{el2014spatially,foad16jstsp}, such designs are mainly restricted to either the transmitter  or the receiver side, which are not applicable to  our considered joint active and passive beamforming optimization at both the  AP and IRS.

To tackle this  new problem, we first consider a single-user setup and apply the semidefinite relaxation (SDR) technique to obtain a high-quality approximate solution as well as  a lower bound of the optimal value to evaluate the tightness of approximate  solutions. To reduce the computational complexity, we further propose an efficient algorithm based on the alternating optimization of the phase shifts and transmit beamforming vector in an iterative manner, where their optimal solutions are derived in closed-form with the other being fixed. Then, we extend our designs for the single-user case to the general multiuser setting, and  propose two algorithms to obtain suboptimal solutions that also offer  different tradeoffs between performance and complexity.
Numerical results demonstrate that the required transmit power at the AP to meet users' SINR targets can be considerably  reduced by deploying the IRS as compared to the conventional setup without using IRS for both single-user and multiuser setups.
In particular,  for serving a single-user in the vicinity of the IRS, it is shown  that the AP's transmit power decreases with the number of reflecting elements $N$ at the IRS in the order of  $N^2$ when $N$ is sufficiently large, which is consistent with the performance scaling law derived analytically. {Note that in \cite{huangachievable}, the authors also considered the use of passive intelligent mirror (analogous to IRS) to enhance the sum-rate in a multiuser system. This paper differs from \cite{huangachievable} in the following two main aspects.
 First, to simplify the system model and algorithm design, \cite{huangachievable} ignored the direct channels from the AP to users, while this paper considers the more general setting with the AP-user direct channels considered.  Second, \cite{huangachievable} adopted the suboptimal zero-forcing (ZF) based precoding at the AP to simplify the optimization of passive phase shifters, while in this paper we optimize AP transmit precoding jointly with IRS's phase shifts. As such, the algorithm proposed in \cite{huangachievable} is not applicable to solving the formulated problems in this paper. }

The rest of this paper is organized as follows. Section II introduces the system model and the problem formulation for designing the IRS-aided  wireless network.
In Sections III and IV, we propose efficient algorithms to solve the formulated problems in the single-user and multiuser cases, respectively.
Section V presents numerical results to evaluate  the performance of the proposed designs. Finally, we conclude the paper  in Section VI.

\emph{Notations:} Scalars are denoted by italic letters, vectors and matrices are denoted by bold-face lower-case and upper-case letters, respectively. $\mathbb{C}^{x\times y}$ denotes the space of $x\times y$ complex-valued matrices. For a complex-valued vector $\bm{x}$, $\|\bm{x}\|$ denotes its Euclidean norm, $\arg(\bm{x})$ denotes a vector with each element being the phase of the corresponding element in $\bm{x}$, and $\text{diag}(\bm{x})$ denotes a diagonal matrix with each diagonal element being the  corresponding element in $\bm{x}$.
The distribution of a circularly symmetric complex Gaussian (CSCG) random vector with mean vector  $\bm{x}$ and covariance matrix ${\bm \Sigma}$ is denoted by  $\mathcal{CN}(\bm{x},{\bm \Sigma})$; and $\sim$ stands for ``distributed as''. For a square matrix $\Ss$, ${\rm{tr}}(\Ss)$ and $\Ss^{-1}$ denote its trace and inverse, respectively, while $\Ss\succeq \bm{0}$ means that $\Ss$ is positive semi-definite.  For any general matrix $\Mm$, $\Mm^H$,  ${\rm{rank}}(\Mm)$, and $\Mm_{i,j}$ denote its conjugate transpose, rank, and $(i,j)$th element, respectively. $\I$ and $\bm{0}$ denote an identity matrix and an all-zero matrix, respectively, with appropriate dimensions. $\mathbb{E}(\cdot)$ denotes the statistical expectation. $ \mathrm{Re}\{\cdot\}$ denotes the real part of a complex number.

\section{System Model and Problem Formulation}
\subsection{System Model}
{As shown in Fig. \ref{system:model}, we consider the IRS-aided downlink communications in a single-cell network  where an IRS is deployed to assist in the communications from a multi-antenna AP to $K$ single-antenna users over a given frequency band.} The set of the users is denoted by  $\mathcal{K}$. The number of transmit antennas at the AP and that of reflecting units at the IRS are denoted by $M$ and $N$, respectively.  The IRS is equipped with a  controller that coordinates its switching between two working modes, i.e., receiving mode for  channel estimation and reflecting mode for data transmission \cite{JR:wu2019IRSmaga,subrt2012intelligent}. Due to the high path loss, it is assumed that the power of the signals that are reflected by the IRS two or more times  is negligible and thus ignored \cite{JR:wu2019IRSmaga}. To characterize the theoretical performance gain brought by the IRS, we assume that the channel state information (CSI) of all channels involved  is perfectly known at the AP. In addition, the quasi-static flat-fading  model is adopted for all channels.
  Since the IRS is a passive reflecting device,  we consider a time-division duplexing (TDD) protocol for  uplink and downlink transmissions and assume  channel reciprocity for the CSI acquisition in the downlink based on the uplink training.

%


{The baseband equivalent channels  from the AP to IRS, from the IRS to user $k$, and from the AP to user $k$ are denoted by $\bm{G}\in \mathbb{C}^{N\times M}$, $\bm{h}^H_{r,k}\in \mathbb{C}^{1\times N}$, and $\bm{h}^H_{d,k}\in \mathbb{C}^{1\times M}$, respectively, with  $k = 1, \cdots,K$.}  { It is worth noting that the reflected channel from the AP to each user via the IRS  is usually referred to as a dyadic backscatter channel in RFID communications \cite{JR:wu2019IRSmaga}, which behaves different from the AP-user direct channel. Specifically, each element of the IRS  receives the superposed multi-path signals from the transmitter, and then scatters the combined signal with adjustable amplitude and/or phase as if from a single point source \cite{JR:wu2019IRSmaga}.} Let $\bm{\theta}= [\theta_1, \cdots, \theta_N]$  and define a diagonal matrix  $\ttheta = \text{diag} (\beta_1 e^{j\theta_1}, \cdots, \beta_N e^{j\theta_N})$ (with $j$ denoting the imaginary unit) as the reflection-coefficients  matrix of the IRS, where $\theta_n\in [0, 2\pi)$ and $\beta_n \in [0, 1]$ denote the phase shift\footnote{To characterize the fundamental performance limits of IRS, we assume that the phase shifts can be continuously varied  in $[0, 2\pi)$, while in practice they are usually selected from a finite number of discrete values from 0 to $2\pi$ for the ease of circuit implementation. The design of IRS with discrete phase shifts is left for our future work \cite{JR:wu2019discreteIRS,wu2018IRS_discrete}.  }  and the amplitude reflection coefficient\footnote{In practice, each element of the IRS is usually designed to maximize the signal reflection. Thus, we set $\beta_n=1, \forall n$, in the sequel of this paper for simplicity. Note that this scenario is different from the traditional  backscatter communication where the RFID tags usually need to harvest a certain amount of energy from the incident signals for powering their  circuit operation and thus a much smaller amplitude reflection coefficient than unity  is resulted in general.} of the $n$th element of the IRS, respectively \cite{JR:wu2019IRSmaga}.
The composite AP-IRS-user channel  is thus modeled as a concatenation of three components, namely, the AP-IRS link, IRS reflection  with phase shifts, and IRS-user link. 

 In this paper, we consider linear transmit precoding at the AP where each user is assigned with one dedicated beamforming vector.  Hence, the complex baseband transmitted signal at the AP can be expressed as
$\bm{x}= \sum_{k=1}^K\bm{w}_ks_k$,
where $s_k$ denotes the transmitted data  for user $k$ and $\bm{w}_k\in \mathbb{C}^{M\times 1}$ is the corresponding  beamforming vector. It is assumed that $s_k$, $k = 1, \cdots,K$, are independent  random variables with zero mean and unit variance (normalized power).
The signal received at user $k$ from both the AP-user and AP-IRS-user channels is then expressed as
\begin{align}
y_k= ( \bm{h}^H_{r,k}\ttheta \bm{G} +  \bm{h}^H_{d,k}) \sum_{j=1}^K\bm{w}_js_j + n_k,  k=1, \cdots, K,
\end{align}
where $n_k \sim \mathcal{CN}(0, \sigma^2_k)$ denotes the  additive white Gaussian noise (AWGN) at the user $k$'s receiver.
Accordingly,  the SINR of user $k$ is given by
\begin{align}\label{eq:SINR}
\text{SINR}_k = \frac{|( \bm{h}^H_{r,k}\ttheta \bm{G}+\bm{h}^H_{d,k})\bm{w}_k |^2}{\sum_{j\neq k}^{K}|( \bm{h}^H_{r,k}\ttheta \bm{G}+\bm{h}^H_{d,k})\bm{w}_j |^2 +  \sigma^2_k}, k=1, \cdots, K.
\end{align}


\vspace{-0.3cm}
\subsection{Problem Formulation}
Let $\W = [{\bm w}_1, \cdots,{\bm w}_K]\in \mathbb{C}^{M\times K}$, $\HH_r = [{\bm h}_{r,1}, \cdots,{\bm h}_{r,K}]\in \mathbb{C}^{N\times K}$, and $\HH_d = [{\bm h}_{d,1}, \cdots,{\bm h}_{d,K}]\in \mathbb{C}^{M\times K}$.
In this paper, we aim to minimize the total transmit power at the AP by jointly optimizing the transmit beamforming at the AP and reflect beamforming at the IRS,  subject to individual SINR constraints at all users. Accordingly, the problem is formulated as
\begin{align}
\text{(P1)}: ~~\min_{\W, \bm{\theta}} ~~~&\sum_{k=1}^{K}\|\bm{w}_k\|^2 \\
\mathrm{s.t.}~~~~&\frac{|( \bm{h}^H_{r,k}\ttheta \bm{G}+\bm{h}^H_{d,k})\bm{w}_k |^2}{\sum_{j\neq k}^{K}|( \bm{h}^H_{r,k}\ttheta \bm{G}+\bm{h}^H_{d,k})\bm{w}_j |^2 +  \sigma^2_k}\geq \gamma_k, \forall k, \label{eq:coupling}\\
& 0\leq \theta_n \leq 2\pi, n=1,\cdots, N, \label{eq:modulus}
\end{align}
where $\gamma_k>0$ is  the minimum SINR requirement of user $k$.
Although the objective function of (P1) and constraints in \eqref{eq:modulus} are convex, it is challenging to solve (P1) due to the non-convex constraints in \eqref{eq:coupling} where the transmit beamforming and phase shifts are coupled. In general, there is no standard method for solving such non-convex optimization problems optimally.  Nevertheless, in the next section, we apply the SDR and alternating optimization techniques, respectively, to solve (P1) approximately  for the single-user case, which are then generalized to the multiuser case.
{Prior to solving problem (P1), we present a sufficient condition for its feasibility as follows. Let $\HH = [{\bm h}_1, \cdots,{\bm h}_K]\in \mathbb{C}^{M\times K}$ where ${\bm h}^H_k = \bm{h}^H_{r,k}\ttheta \bm{G}+\bm{h}^H_{d,k}$, $\forall k$.
\begin{proposition}\label{feasibility:condition}
Problem (P1) is feasible for any finite user SINR targets $\gamma_k$'s if ${\rm{rank}}(\G^H\HH_r + \HH_d)=K$.
\end{proposition}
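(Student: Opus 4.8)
The plan is to show that when ${\rm{rank}}(\G^H\HH_r + \HH_d)=K$, we can choose an arbitrary (fixed) feasible phase-shift vector $\bm{\theta}$ and then exhibit a transmit beamforming matrix $\W$ meeting all SINR constraints, which suffices since (P1) only requires existence of one feasible point. First I would fix, say, $\theta_n = 0$ for all $n$ (any choice satisfying \eqref{eq:modulus} works), so that $\ttheta = \I$ and the effective channel matrix becomes $\HH = \G^H\HH_r + \HH_d$, which by hypothesis has rank $K$; in particular $K \le M$ and the columns ${\bm h}_1,\dots,{\bm h}_K$ are linearly independent. The key observation is that with full column rank, zero-forcing beamforming is available: there exists a matrix whose columns ${\bm w}_k$ satisfy ${\bm h}_k^H {\bm w}_j = 0$ for all $j \neq k$, so that all inter-user interference terms in the denominator of \eqref{eq:coupling} vanish.

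Concretely, I would take the pseudo-inverse $\HH^\dagger = \HH(\HH^H\HH)^{-1}$ (well-defined since $\HH^H\HH \succ \bm{0}$ by full column rank), and set ${\bm w}_k = \sqrt{p_k}\, \tilde{\bm w}_k$ where $\tilde{\bm w}_k$ is the $k$th column of $\HH(\HH^H\HH)^{-1}$, scaled so that $\tilde{\bm w}_k$ has a convenient normalization; then ${\bm h}_k^H \tilde{\bm w}_j = \delta_{kj}$. With this choice, $\text{SINR}_k = p_k / \sigma_k^2$, so choosing the power allocation $p_k = \gamma_k \sigma_k^2$ makes every SINR constraint hold with equality. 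Since each $\gamma_k$ is finite and each $\sigma_k^2$ is finite, the resulting $p_k$'s are finite, hence $\sum_k \|{\bm w}_k\|^2 = \sum_k p_k \|\tilde{\bm w}_k\|^2 < \infty$, so this $(\W, \bm{\theta})$ is a valid feasible point of (P1).

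The argument is essentially routine once the right reduction is made, so there is no serious obstacle; the only point requiring a little care is verifying that ${\rm{rank}}(\HH^H\HH) = {\rm{rank}}(\HH) = K$ guarantees invertibility of $\HH^H\HH$ (standard linear algebra) and that the zero-forcing columns are nonzero so the normalization is legitimate. One should also note explicitly that the rank condition forces $M \ge K$, which is implicitly needed for zero-forcing to exist; this is automatic from $\HH \in \mathbb{C}^{M\times K}$ having rank $K$. I would close by remarking that this is only a sufficient condition: feasibility may still hold for particular channel realizations even when the rank condition fails, and more importantly the IRS phase shifts $\bm{\theta}$ could be optimized to \emph{induce} a rank-$K$ effective channel in cases where $\G^H\HH_r + \HH_d$ itself is rank-deficient, which is one of the benefits of the joint design studied in the sequel.
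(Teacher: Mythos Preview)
Your proposal is correct and follows essentially the same approach as the paper's own proof: fix $\ttheta=\I$, observe that $\HH=\G^H\HH_r+\HH_d$ then has full column rank $K$, and apply zero-forcing via the pseudo-inverse $\HH(\HH^H\HH)^{-1}$ with per-user power scaling $p_k=\gamma_k\sigma_k^2$, which is exactly the paper's choice $\W=\HH(\HH^H\HH)^{-1}\mathrm{diag}(\gamma_1\sigma_1^2,\dots,\gamma_K\sigma_K^2)^{1/2}$. Your additional remarks on $M\ge K$ and on sufficiency are sound elaborations not present in the paper.
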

\begin{proof}
If ${\rm{rank}}(\G^H\HH_r + \HH_d)=K$, the (right) pseudo inverse of $\HH^H =\HH_{r}^{H}\ttheta\G + \HH_{d}^{H}$ exists with $\ttheta = \I$ and the precoding matrix $\W$ at the AP can be set as
\begin{align}
\W =  \HH( \HH^H \HH)^{-1} \text{diag}(\gamma_1\sigma_1^2,\cdots, \gamma_k\sigma_k^2)^{\frac{1}{2}}.
\end{align}
It is easy to verify that the above solution allows all users to achieve their corresponding $\gamma_k$'s and  thus (P1) is feasible.
\end{proof} }

{
Thanks to the additional AP-IRS-user link, the rank condition in Proposition \ref{feasibility:condition} is practically easier to be satisfied  in an IRS-aided system, as
compared to that in the case without the IRS, i.e., ${\rm{rank}}(\HH_d)=K$. For instance, if the AP-user direct channels of two users lie in the same direction, then ${\rm{rank}}(\HH_d)=K$ does not hold. While the rank condition in an IRS-aided system may still hold  since the combined AP-user channels (including both the AP-user direct and AP-IRS-user reflected links) of these two users are unlikely to be aligned too, due to the additional IRS reflected paths.}
\vspace{-0.4cm}
\section{Single-User System}\label{sec:III}
In this section, we consider the single-user setup, i.e., $K=1$, to draw important  insights into the optimal joint beamforming design.  In this case, no inter-user interference is present, and thus (P1) is simplified to  (by dropping the user index)
\begin{align}
\text{(P2)}: ~~\min_{\bm{w}, \bm{\theta}} ~~~& \|\bm{w}\|^2  \label{eq:obj}\\
\mathrm{s.t.}~~~~&| (\bm{h}^H_r\ttheta \bm{G}+\bm{h}^H_d )\bm{w}|^2   \geq \gamma \sigma^2,   \label{SINR:constraints} \\
& 0\leq \theta_n \leq 2\pi,  n=1,\cdots, N. \label{phase:constraints}
\end{align}
Although much simplified, problem (P2) is still a non-convex optimization problem since the left-hand-side (LHS) of \eqref{SINR:constraints} is not jointly concave with respect to  $\bm{w}$ and $\bm{\theta}$.
 In the next two subsections, we  solve (P2) by applying the SDR and alternating optimization techniques, respectively, which will be extended to the general  multiuser system in the next section.
 \vspace{-0.4cm}
\subsection{SDR}

We first apply SDR to solve problem (P2), which also helps obtain a lower bound of the optimal value of (P2) for evaluating the performance gaps from other suboptimal solutions.  For any given phase shift $\bm{\theta}$, it is known that the maximum-ratio transmission (MRT) is the optimal transmit beamforming solution  to problem (P2) \cite{tse2005fundamentals}, i.e.,
$\bm w^* = \sqrt{ P} \frac{(\bm{h}^H_r\Theta \bm{G}+\bm{h}^H_d  )^H}{\|\bm{h}^H_r\ttheta \bm{G} +\bm{h}^H_d \|}$, where $P$ denotes the transmit power of the AP. Substituting $\bm w^*$ to  problem (P2) yields the following  problem
\begin{align}\label{secIII:p2}
 \min_{P, \bm{\theta}} ~~~&p \\
\mathrm{s.t.}~~~~&p\|\bm{h}^H_r\ttheta \bm{G}+ \bm{h}^H_d\|^2\geq \gamma\sigma^2,   \label{SecIII:SNRconstraint0}\\
&0\leq \theta_n \leq 2\pi, \forall n.  \label{SecIII:phaseconstraint0}
\end{align}
It is not difficult to verify that the optimal transmit power satisfies $P^* =  \frac{\gamma\sigma^2}{\|\bm{h}^H_r\ttheta \bm{G}+ \bm{h}^H_d\|^2}$. As such, minimizing the transmit power is equivalent to maximizing the channel power gain of the combined channel, i.e.,
\begin{align}\label{secIII:p3}
\max_{\bm{\theta}} ~~~&\|\bm{h}^H_r\ttheta \bm{G}+ \bm{h}^H_d\|^2\\
\mathrm{s.t.}~~~~&0\leq \theta_n \leq 2\pi, \forall n.  \label{SecIII:phaseconstraint}
\end{align}
Let $\bm{v} = [v_1, \cdots, v_N]^H$ where $v_n = e^{j\theta_n}$, $\forall n$. Then, constraints in \eqref{SecIII:phaseconstraint} are equivalent to  the unit-modulus constraints:  $|v_n|^2=1, \forall n$. By applying the change of variables $\bm{h}^H_r\ttheta \bm{G} =\bm{v}^H\bm{\Phi} $ where $\bm{\Phi}=\text{diag}(\bm{h}^H_r)\bm{G} \in \mathbb{C}^{N \times M}$, we have $\|\bm{h}^H_r\ttheta \bm{G} + \bm{h}^H_d\|^2 =\|\bm{v}^H\bm{\Phi}+ \bm{h}^H_d\|^2 $. Thus, problem \eqref{secIII:p3} is equivalent to
\begin{align}\label{secIII:p4}
\max_{\bm{v}} ~~~&\bm{v}^H\bm{\Phi}\bm{\Phi}^H\bm{v} + \bm{v}^H\bm{\Phi}\bm{h}_d+\bm{h}^H_d\bm{\Phi}^H \bm{v}  + \|\bm{h}^H_d\|^2\\
\mathrm{s.t.}~~~~& |v_n|^2=1, \forall n. \label{P4:C9}
\end{align}
Note that problem \eqref{secIII:p4} is a non-convex quadratically constrained quadratic program (QCQP), which can be reformulated as a homogeneous QCQP \cite{so2007approximating}. Specifically,
by introducing an auxiliary variable $t$, problem  \eqref{secIII:p4} is  equivalently written as
\begin{align} \label{secIII:p5}
\max_{\bm{\bar{v}}} ~~~&\bm{\bar{v}}^H\bm{R}\bm{\bar{v}}   + \|\bm{h}^H_d\|^2 \\
\mathrm{s.t.}~~~~& |\bar{v}_n|^2=1,  n=1,\cdots, N+1, \label{P4:C9}
\end{align}
where
\[
\bm{R}=\begin{bmatrix}
\bm{\Phi}\bm{\Phi}^H  & \bm{\Phi}\bm{h}_d \\
\bm{h}^H_d\bm{\Phi}^H  & 0 \\
\end{bmatrix},~~
\bm{\bar{v}}=\begin{bmatrix}
\bm{v}  \\
t \\
\end{bmatrix}.
\]
However, problem \eqref{secIII:p5} is still non-convex  in general \cite{so2007approximating}. 
Note that $\bm{\bar{v}}^H\bm{R}\bm{\bar{v}}={\rm{tr}}(\bm{R}\bm{\bar{v}}\bm{\bar{v}}^H)  $.
Define $\bm{V}=\bm{\bar{v}}\bm{\bar{v}}^H$, which needs to satisfy  $\bm{V}\succeq \bm{0}$ and ${\rm{rank}}(\bm{V})=1$. Since the rank-one constraint is non-convex, we apply SDR to relax this constraint. As a result,
problem \eqref{secIII:p5} is reduced to
\begin{align}\label{secIII:p6}
\max_{\bm{V}} ~~~&{\rm{tr}}(\bm{RV})   + \|\bm{h}^H_d\|^2 \\
\mathrm{s.t.}~~~~& \bm{V}_{n,n} = 1,  n=1,\cdots, N+1, \label{P6:C9} \\
~~~~&\bm{V} \succeq 0.  \label{P6:C9}
\end{align}
As problem \eqref{secIII:p6} is a convex semidefinite program (SDP), it can be optimally solved by existing convex optimization solvers such as CVX \cite{cvx}.
Generally,  the relaxed problem  \eqref{secIII:p6} may not lead to a rank-one solution, i.e., ${\rm{rank}}(\bm{V})\neq1$, which implies that  the optimal objective value of problem  \eqref{secIII:p6} only serves an upper bound of problem  \eqref{secIII:p5}. Thus, additional steps are needed to construct  a rank-one solution from the obtained higher-rank solution to problem  \eqref{secIII:p6}, while  the details can be found in \cite{wu2018IRS} and thus are omitted here.
 It has been shown that such an SDR approach followed by a sufficiently large number of randomizations  guarantees at least a $\frac{\pi}{4}$-approximation of the optimal objective value of problem \eqref{secIII:p6} \cite{so2007approximating}.

\vspace{-0.5cm}
\subsection{Alternating Optimization}
To achieve lower complexity than the SDR-based solution presented in the preceding subsection, we propose an alternative suboptimal  algorithm in this subsection based on alternating optimization.  Specifically, the transmit beamforming direction and transmit power at the AP  are optimized iteratively with the phase shifts at the IRS in an alternating manner, until the convergence is achieved.

Let $\bm{w}=\sqrt{P}\bm{\bar w}$ where $\bm{\bar w}$ denotes the transmit beamforming direction and $P$ is the transmit power. {For fixed transmit beamforming direction $\bm{\bar w}$,  (P2) is reduced to a joint transmit power and phase shifts optimization problem which can be formulated as (similar to \eqref{secIII:p2} and  \eqref{secIII:p3}),}
\begin{align}\label{secIII2:p3}
\max_{\bm{\theta}} ~~~&|(\bm{h}^H_r\ttheta \bm{G}+ \bm{h}^H_d){\bm{\bar w}}|^2\\
\mathrm{s.t.}~~~~&0\leq \theta_n \leq 2\pi, n=1,\cdots, N.  \label{SecIII2:phaseconstraint}
\end{align}
Although being non-convex, the above problem admits a closed-form solution by exploiting the special structure of its objective function. Specifically,  we have the following inequality:
\begin{align}\label{SecIV:distributed}
| (\bm{h}^H_r\ttheta \bm{G}+\bm{h}^H_d )\bm{\bar w}| &=| \bm{h}^H_r\ttheta \bm{G}\bm{\bar w}+ \bm{h}^H_d\bm{\bar w}|  \overset{(a)} \leq  | \bm{h}^H_r\ttheta \bm{G}\bm{\bar w}| + | \bm{h}^H_d\bm{\bar w}|,
\end{align}
where $(a)$ is due to the triangle inequality and the equality holds if and only if  $\mathrm{arg}(\bm{h}^H_r\ttheta \bm{G}\bm{\bar w})=\mathrm{arg}(   \bm{h}^H_d\bm{\bar w} )\triangleq \varphi_0$.
Next, we show that there always exists a solution $\bm{\theta}$ that satisfies $(a)$ with equality as well as  the phase shift constraints in \eqref{SecIII2:phaseconstraint}.
Let  $\bm{h}^H_r\ttheta\bm{G}\bm{w} =\bm{v}^H\bm{a}$ where $\bm{v} = [e^{j\theta_1}, \cdots, e^{j\theta_N}]^H$ and $\bm{a}=\text{diag}(\bm{h}^H_r)\bm{G}\bm{\bar w}$. With \eqref{SecIV:distributed},  problem \eqref{secIII2:p3} is equivalent to
\begin{align}
\max_{\bm{v}} ~~~&|\bm{v}^H\bm{a}|^2\\
\mathrm{s.t.}~~~~&  |v_n|=1, \forall n=1,\cdots, N,\\
& \mathrm{arg}(\bm{v}^H\bm{a})=  \varphi_0. 
\end{align}
It is not difficult to show that the optimal solution to the above problem is given by $\bm{v}^* = e^{j (\varphi_0 -  \arg(\aaa) )}=e^{j ( \varphi_0 - \arg( \text{diag}(\bm{h}^H_r)\bm{G}\bm{\bar w}) )}$. Thus, the $n$th phase shift at the IRS is given by
\begin{align}\label{phase:sub}
\theta^*_n &=\varphi_0 -  \arg({h}^H_{n,r}\bm{g}^H_{n}\bm{\bar w})= \varphi_0 -  \arg({h}^H_{n,r})- \arg(\bm{g}^H_{n}\bm{\bar w}),
\end{align}
where ${h}^H_{n,r}$ is the $n$th element of $\bm{h}^H_r$ and $\bm{g}^H_{n}$ is the $n$th row vector of $\bm{G}$. Note that $\bm{g}^H_{n}\bm{\bar w}$ combines the transmit beamforming and the AP-IRS channel, which can be regarded as the effective channel perceived by the $n$th reflecting element at the IRS.  Therefore, \eqref{phase:sub} suggests that the $n$th phase shift should be tuned such that the phase of the signal that passes through the AP-IRS and IRS-user links is aligned with that of the signal over the  AP-user direct  link to achieve coherent signal combining  at the user. Furthermore, it is interesting to note that the obtained phase $\theta^*_n $ is independent of the amplitude of ${h}_{n,r}$. As a result, the optimal  transmit power is given by $P^*=\frac{\gamma \sigma^2}{\|(\bm{h}^H_r\ttheta \bm{G}+\bm{h}^H_d){\bm{\bar w}} \|^2}$ from (P2).  Next, we optimize the transmit beamforming direction for given $\bm{\theta}$ in \eqref{phase:sub}.  As in Section III-A, the combined AP-user channel is given by $\bm{h}^H_r\ttheta \bm{G} +  \bm{h}^H_d$ and hence MRT is optimal, i.e., $\bm{\bar w}^{*} = \frac{(\bm{h}^H_r\ttheta \bm{G}+\bm{h}^H_d  )^H}{\|\bm{h}^H_r\ttheta \bm{G} +\bm{h}^H_d \|}$.  {The above alternating optimization approach is practically appealing since both the transmit beamforming and phase shifts are obtained in closed-form expressions, without invoking the SDP solver. Its convergence is guaranteed by the following two facts. First, for each subproblem, the optimal solution is obtained which ensures that the objective value of (P2) is non-increasing over iterations. Second, the optimal value of (P2) is bounded from below  due to the SNR constraint. Thus, the proposed algorithm  is guaranteed to converge. }

\vspace{-0.4cm}
{ \subsection{Power Scaling Law with Infinitely Large Surface}
{Next, we characterize the scaling law of the average received power at the user with respect to the number of reflecting elements, $N$,  in an IRS-aided system with $N \rightarrow \infty$. For simplicity, we assume $M=1$ with $\G\equiv \bm{g}$ to obtain essential insight. By ignoring the AP-user direct channel,  the user's received  power is  given by $P_u= P|{h}^H|^2=P |\bm{h}^H_r\ttheta \bm{g}|^2$. We consider three different phase shift solutions, i.e., 1) unit phase shift where $\ttheta =\I$; 2) random phase shift where $\theta_n$'s in $\ttheta$ are uniformly and randomly distributed in $[0, 2\pi)$; and 3) optimal phase shift which is obtained by the above two  proposed algorithms  (both optimal for $M=1$).
\begin{proposition}\label{scaling:law}
Assume $\bm{h}^H_{r} \sim \mathcal{CN}(\bm{0},\varrho^2_h\I)$ and $\bm{g} \sim \mathcal{CN}(\bm{0},\varrho^2_g\I)$.  As $N\rightarrow \infty$, it holds that
\begin{equation}\label{eq:scaling:law}
P_u\rightarrow
\left\{
\begin{aligned}
&NP\varrho^2_h\varrho^2_g,  && \text{for} ~~ \ttheta =\I~{\text{or  random}}~ \ttheta,  \\
&N^2\frac{P\pi^2\varrho^2_h\varrho^2_g}{16},  && \text{for}  ~~\text{optimal}~\ttheta.
\end{aligned}
\right.
\end{equation}
\end{proposition}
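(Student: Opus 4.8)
The plan is to compute the \emph{average} received power $\bar P_u = \mathbb{E}[P_u]$, where the expectation is over the random fading coefficients (and, in the random-phase case, over the phases as well). Writing the effective scalar channel entrywise, $\bm{h}^H_r\ttheta\bm{g} = \sum_{n=1}^{N} h^*_{r,n}\, e^{j\theta_n}\, g_n$ with $h_{r,n}, g_n$ the $n$th entries of $\bm{h}_r$ and $\bm{g}$, so that $P_u = P\,\big|\sum_{n=1}^{N} h^*_{r,n}e^{j\theta_n}g_n\big|^2$. Each of the three cases then reduces to a second-moment computation of this sum.

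For $\ttheta=\I$ and for random $\ttheta$, I would observe that the summands $Y_n \triangleq h^*_{r,n}e^{j\theta_n}g_n$ are mutually independent with $\mathbb{E}[Y_n]=0$: in the unit-phase case because $h_{r,n}$ and $g_n$ are independent and zero-mean, and in the random-phase case additionally because $\mathbb{E}[e^{j\theta_n}]=0$. Hence every cross term in $\mathbb{E}\big|\sum_n Y_n\big|^2$ vanishes, and with $\mathbb{E}[|Y_n|^2]=\mathbb{E}[|h_{r,n}|^2]\mathbb{E}[|g_n|^2]=\varrho^2_h\varrho^2_g$ we obtain $\bar P_u = P\sum_{n=1}^N \mathbb{E}[|Y_n|^2] = NP\varrho^2_h\varrho^2_g$. (This is actually an identity for every $N$, and indeed holds for \emph{any} fixed phase configuration, since the cross terms vanish regardless of $\bm{\theta}$ whenever the channels are zero-mean; the random-phase choice just illustrates that unoptimized phases do not help on average.)

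For the optimal $\ttheta$, I would invoke the closed-form solution from Section~\ref{sec:III}, specialized to $M=1$, $\bm{\bar w}=1$, and (since the direct link is dropped) $\varphi_0=0$: taking $\theta^*_n = \arg(h_{r,n})-\arg(g_n)$ makes each term of the sum real and nonnegative, so that $\big|\bm{h}^H_r\ttheta^*\bm{g}\big| = \sum_{n=1}^{N}|h_{r,n}|\,|g_n|$ and $P_u = P\big(\sum_{n}|h_{r,n}||g_n|\big)^2$. Expanding the square and using independence of the coefficients across $n$,
\begin{align*}
\bar P_u &= P\Big( N\,\mathbb{E}[|h_{r,n}|^2]\,\mathbb{E}[|g_n|^2] + N(N-1)\big(\mathbb{E}[|h_{r,n}|]\big)^2\big(\mathbb{E}[|g_n|]\big)^2\Big) \\
&= P\Big( N\varrho^2_h\varrho^2_g + N(N-1)\tfrac{\pi^2}{16}\varrho^2_h\varrho^2_g\Big),
\end{align*}
where I used that $|h_{r,n}|$ and $|g_n|$ are Rayleigh-distributed with second moments $\varrho^2_h,\varrho^2_g$ and first moments $\tfrac{\sqrt\pi}{2}\varrho_h$ and $\tfrac{\sqrt\pi}{2}\varrho_g$, respectively. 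Letting $N\to\infty$, the $N(N-1)$ term dominates and $\bar P_u \to N^2\,\tfrac{P\pi^2\varrho^2_h\varrho^2_g}{16}$.

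The one point to handle carefully --- and the crux of the result --- is the moment bookkeeping in the optimal-phase case: the quadratic-in-$N$ growth comes entirely from the $N(N-1)$ off-diagonal terms, which survive the expectation only because coherent combining replaces the zero-mean products $h^*_{r,n}g_n$ by the strictly positive quantities $|h_{r,n}||g_n|$, and one must correctly evaluate the first absolute moment of a circularly symmetric complex Gaussian (the $\sqrt\pi/2$ factor, hence the $\pi^2/16$). The remaining steps are routine second-moment calculations.
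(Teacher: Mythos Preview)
Your proposal is correct and reaches the same conclusions, but the mechanics differ from the paper's argument in a way worth noting. The paper invokes limit theorems: for $\ttheta=\I$ it appeals to the Lindeberg--L\'evy central limit theorem to assert $\bm{h}^H_r\bm{g}\sim\mathcal{CN}(0,N\varrho^2_h\varrho^2_g)$ and then takes the second moment, while for random $\ttheta$ it argues distributionally that $\ttheta\bm{g}$ has the same law as $\bm{g}$ (by unitarity) and reduces to the previous case; for optimal $\ttheta$ it applies the law of large numbers to $\tfrac{1}{N}\sum_n|h_{r,n}||g_n|\to\pi\varrho_h\varrho_g/4$ and squares. Your approach bypasses both limit theorems in favor of a direct second-moment expansion, which is more elementary and, as you observe, delivers the unit/random-phase result \emph{exactly} for every $N$ (and for any fixed phase configuration), not merely asymptotically. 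In the optimal-phase case your diagonal/off-diagonal decomposition likewise gives an exact finite-$N$ expression before passing to the limit. The trade-off is that the paper's LLN route for the optimal case yields a sample-path (almost-sure) statement about $P_u$ itself, whereas your calculation is about $\mathbb{E}[P_u]$; the proposition's phrasing is loose enough that either reading is defensible, and your interpretation is the one the paper actually adopts in the unit/random-phase case.
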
  }
\begin{proof}
The three cases are discussed as follows:
\begin{itemize}
                \item When $ \ttheta =\I$, we have ${h}^H =  \bm{h}^H_{r} \bm{g}$. By invoking the Lindeberg-L$\acute{\rm e}$vy central limit theorem \cite{cramer2004random}, we have $ \bm{h}^H_{r} \bm{g} \sim \mathcal{CN}(0, N\varrho^2_h\varrho^2_g)$ as  $N\rightarrow \infty$. As the equivalent channel ${h}^H$ is a random variable, the average  user received power is given by $P_u= P\mathbb{E}(|{h}^H|^2) \rightarrow  NP\varrho^2_h\varrho^2_g$.
                    For the case of random phase shifts with $\theta_n\in [0, 2\pi)$, we have  ${h}^H =  \bm{h}^H_{r}\bm{\bar g}$ where $\bm{\bar g} =\ttheta \bm{g}$.   As  $\ttheta$ is a unitary matrix,    it follows  that $\bm{\bar g}$ has the same distribution as  $\bm{g}$, i.e., $\bm{\bar g} \sim \mathcal{CN}(\bm{0},\varrho^2_g\I)$. Thus we attain the same result as $ \ttheta =\I$.
                \item For optimal $\ttheta$ where the solution is given by \eqref{phase:sub}, we have  $|{h}^H |=|  \bm{h}^H_{r} \ttheta\bm{g} |=\sum_{n=1}^N |h_{r,n}||g_n|$, where $h_{r,n}$ and $g_n$ are the $n$-th elements in $\bm{h}^H_{r}$ and $\bm{g}$, respectively. The received power is then given by
                                        \begin{align} \label{IRS:receive:power}
P_u & =P \big|\sum_{n=1}^N |h_{r,n}||g_n| \big|^2.
\end{align}
                    Since $|h_{r,n}|$ and $|g_n|$ are statistically independent and follow Rayleigh distribution with mean values $\sqrt{\pi}\varrho_h/2$ and  $\sqrt{\pi}\varrho_g/2$, respectively, we have $\mathbb{E}(|h_{r,n}||g_n|)=\pi\varrho_h\varrho_g/4$. By using the fact that ${\sum_{n=1}^N|h_{r,n}||g_n|}/{N}\rightarrow \pi\varrho_h\varrho_g/4$ as  $N\rightarrow \infty$, it follows that
\begin{align}
P_u \rightarrow N^2\frac{P\pi^2\varrho^2_h\varrho^2_g}{16}.
\end{align}
\end{itemize}
This thus completes the proof.
\end{proof}   }

{The power scaling law with the optimal IRS phase design  in Proposition \ref{scaling:law} is highly promising  since it implies that by using a large number of reflecting units at the IRS, we can scale down the transmit power of the AP by a factor of $1/N^2$ without compromising the user received SNR. The fundamental reason behind such a ``squared gain'' is that the IRS not only achieves the transmit beamforming gain of order $N$ in the IRS-user link as in the conventional  massive MIMO \cite{Hien2013}, but also captures an inherent aperture gain of order $N$ by collecting more signal power in the AP-IRS link, which, however, cannot be achieved by scaling up the number of transmit antennas in massive MIMO due to the fixed total transmit power. Moreover, for the two benchmark cases with unit and random phase shifts at the IRS, a received power gain of order $N$ is also achieved. This shows the practical usefulness of the IRS, even without requiring any channel knowledge for  optimally setting the phase shifts. Note that  the received  noise power in the IRS-aided system remains constant as $N$ increases and thus the corresponding user receive   SNR also has the same squared gain  as the received signal power with increasing $N$. }

{Next, we show the performance  scaling law of an FD AF relay aided system under the same setup as the above IRS-aided system. The relay is equipped with $N$ transmit and $N$ receive antennas and the direct channel from the AP to the user can be similarly ignored when $N$ is asymptotically large. We assume that the relay adopts linear receive and transmit beamforming vectors, denoted by $\x^H_r$ and $\x_t$,  respectively.  In addition, perfect self-interference cancellation (SIC)  is assumed at the relay so that the obtained  performance serves as an upper bound for the practical case with imperfect SIC. In this case, the user receive   SNR can be expressed as
\begin{align}\label{SNR:AF}
\gamma_{FD} =\frac{P P_r\| \x_r^H\g  \|^2 \| \h^H_{r}\x_t  \|^2}{P_r \sigma^2_r\| \x_r \|^2\| \h^H_{r}\x_t \|^2 +  P\sigma^2\| \x_t \|^2\|\x^H_r\g \|^2+\sigma^2_r\sigma^2\| \x_t \|^2 \|\x_r \|^2},
\end{align}
where  $P_r$ and $\sigma^2_r$ denote the transmit power and the noise power at the relay, respectively.   It is not difficult to show that  the optimal solution maximizing $\gamma_{FD}$ satisfies $\x^*_t = \frac{\h_{r}}{\| \h_{r}\|}$ and $\x^*_r = \frac{\g}{\| \g\| }$. Substituting $\x^*_t$ and $\x^*_r$ into \eqref{SNR:AF}, we have
$\gamma_{FD} =    \frac{PP_r\|\g\|^2 \|\h_{r}\|^2  }{  P_r \sigma^2_r \|\h_{r}\|^2   + P \sigma^2 \|\g\|^2 +\sigma^2_r\sigma^2}.$
Then, we have the following proposition.
\begin{proposition}\label{scaling:law:AF}
Assume $\bm{h}^H_{r} \sim \mathcal{CN}(\bm{0},\varrho^2_h\I)$ and  $\bm{g} \sim \mathcal{CN}(\bm{0},\varrho^2_g\I)$.   As $N\rightarrow \infty$, it holds that
\begin{align}
\gamma_{FD} \rightarrow   \frac{PP_r \varrho^2_{g}\varrho^2_{h}N  }{  P_r  \sigma^2_r \varrho^2_{h} + P \sigma^2\varrho^2_{g} }.  \label{SNR2}
\end{align}
\end{proposition}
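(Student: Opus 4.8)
The plan is to work directly from the simplified SNR expression already obtained in the excerpt by substituting the optimal relay beamformers $\x^*_r = \g/\|\g\|$ and $\x^*_t = \h_{r}/\|\h_{r}\|$ into \eqref{SNR:AF}, namely
\begin{align}
\gamma_{FD} = \frac{PP_r\|\g\|^2\,\|\h_{r}\|^2}{P_r\sigma^2_r\|\h_{r}\|^2 + P\sigma^2\|\g\|^2 + \sigma^2_r\sigma^2}.
\end{align}
No further optimization is needed; the only randomness left sits in the two channel norms $\|\g\|^2$ and $\|\h_{r}\|^2$, and the whole argument reduces to controlling how these grow with $N$.

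First I would note that $\|\g\|^2 = \sum_{n=1}^N |g_n|^2$ is a sum of i.i.d.\ exponential random variables with mean $\varrho^2_g$ (since $\g\sim\mathcal{CN}(\bm 0,\varrho^2_g\I)$), and likewise $\|\h_{r}\|^2 = \sum_{n=1}^N |h_{r,n}|^2$ with $\mathbb{E}(|h_{r,n}|^2)=\varrho^2_h$. By the law of large numbers --- the same tool already used in the proof of Proposition \ref{scaling:law} --- we have $\|\g\|^2/N \to \varrho^2_g$ and $\|\h_{r}\|^2/N \to \varrho^2_h$ as $N\to\infty$. Dividing the numerator and denominator of $\gamma_{FD}$ by $N$ then gives
\begin{align}
\frac{\gamma_{FD}}{N} = \frac{PP_r\,(\|\g\|^2/N)(\|\h_{r}\|^2/N)}{P_r\sigma^2_r(\|\h_{r}\|^2/N) + P\sigma^2(\|\g\|^2/N) + \sigma^2_r\sigma^2/N},
\end{align}
and passing to the limit: the numerator tends to $PP_r\varrho^2_g\varrho^2_h$, the first two denominator terms tend to $P_r\sigma^2_r\varrho^2_h + P\sigma^2\varrho^2_g$, and the residual term $\sigma^2_r\sigma^2/N$ vanishes. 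Multiplying back by $N$ yields exactly \eqref{SNR2}.

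The calculation itself is routine, so the only point that really needs care is the meaning of ``$\to$'' in the statement: the claim is an asymptotic equivalence $\gamma_{FD}\sim cN$ (equivalently, $\gamma_{FD}/N$ converges to the stated constant), not convergence of $\gamma_{FD}$ to a finite value. The key structural observation is simply that both $\|\g\|^2$ and $\|\h_{r}\|^2$ scale linearly in $N$, so the numerator behaves like $N^2$ while the denominator is governed by its two $\Theta(N)$ terms, the constant noise-product term $\sigma^2_r\sigma^2$ becoming negligible. This is also what makes the comparison with the IRS case transparent: the FD-relay SNR grows only like $N$, as opposed to the $N^2$ growth of the IRS received power/SNR under the optimal phase shifts in Proposition \ref{scaling:law}.
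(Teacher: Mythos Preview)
Your proposal is correct and follows essentially the same route as the paper: both start from the simplified expression $\gamma_{FD} = \frac{PP_r\|\g\|^2\|\h_{r}\|^2}{P_r\sigma^2_r\|\h_{r}\|^2 + P\sigma^2\|\g\|^2 + \sigma^2_r\sigma^2}$, invoke the law of large numbers to replace $\|\g\|^2/N$ and $\|\h_{r}\|^2/N$ by $\varrho^2_g$ and $\varrho^2_h$, and then drop the now-negligible constant term $\sigma^2_r\sigma^2$ in the denominator. Your normalization by $N$ and your remark that ``$\to$'' here means $\gamma_{FD}/N$ converges to the stated constant are in fact cleaner than the paper's slightly informal two-step ``$\to$'' then ``$\approx$'' presentation, but the substance is identical.
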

\begin{proof}
Since $\frac{\|\g\|^2}{N}\rightarrow \varrho^2_{g}$ and $\frac{\|\h_{r}\|^2}{N}\rightarrow \varrho^2_{h}$  as $N\rightarrow \infty$ \cite{Hien2013}, it follows that
\begin{align}\label{SNR:3}
\gamma_{FD}  \rightarrow    \frac{P P_r  \varrho^2_{g}\varrho^2_{h} N^2  }{  P_r \sigma^2_r \varrho^2_{h}N  +  P\sigma^2 \varrho^2_{g}N +\sigma^2_r\sigma^2}\approx \frac{PP_r \varrho^2_{g}\varrho^2_{h}N  }{  P_r  \sigma^2_r \varrho^2_{h} + P \sigma^2\varrho^2_{g} }.
\end{align}
This thus completes the proof.
\end{proof}   }

{Proposition  \ref{scaling:law:AF} shows that even with perfect SIC, the receive SNR by using the FD AF relay  increases only linearly with $N$ when $N$ is asymptotically large.  This is fundamentally due to the noise effect at the AF relay. To be specific, although the signal power in the FD AF relay system scales in the order of $N^2$ same as that in the IRS-aided system, its effective noise power at the receiver also scales linearly with $N$  (see  \eqref{SNR:3}) in contrast to the constant noise power $\sigma^2$ in the IRS-aided system, thus resulting in a lower SNR gain order with $N$. Last, it is worth mentioning that for the HD AF relay system, its receive SNR scaling order with $N$ can be shown to be identical to  that of the FD AF relay system given in Proposition 3.}
\section{Multiuser System}\label{multiuser:sec}
In this section, we consider the general multiuser setup.  Specifically, we propose two efficient algorithms to solve (P1) suboptimally  by generalizing the two approaches in the single-user case.
\subsection{Alternating Optimization Algorithm }
This algorithm leverages  the alternating optimization similarly as in the single-user case, while  the transmit beamforming at the AP is designed   by applying the well-known minimum mean squared error (MMSE) criterion to cope with the multiuser interference instead of using MRT in the single-user case without interference.
For given phase shift ${\bm \theta}$,  the combined channel from the AP to user $k$ is given by ${\bm{h}}^H_k= \bm{h}^H_{r,k}\ttheta\bm{G}+\bm{h}^H_{d,k}$.
Thus, problem (P1) is reduced to
\begin{align}
\text{(P3)}: ~~\min_{\W} ~~~&\sum_{k=1}^{K}\|\bm{w}_k\|^2 \\
\mathrm{s.t.}~~~~&\frac{|{\bm{h}}^H_k\bm{w}_k |^2}{\sum_{j\neq k}^{K}|{\bm{h}}^H_k\bm{w}_j |^2 +  \sigma^2_k}\geq \gamma_k, \forall k.\label{P2:SINR}
\end{align}
Note that (P3) is the conventional power minimization problem in the multiuser MISO downlink broadcast channel, which can be efficiently solved by using second-order cone program (SOCP) \cite{wiesel2006linear}, SDP \cite{bengtsson2001handbook},  or a fixed-point iteration algorithm based on the uplink-downlink duality \cite{schubert2004solution,luo2006introduction}. In addition, it is easy to verify that at the optimal solution to problem (P3), all the SINR constraints in \eqref{P2:SINR} are met with equalities.

On the other hand,  for given transmit beamforming $\W$,  problem (P1) is reduced to a feasibility-check problem. Let $\bm{h}^H_{d,k}\bm{w}_j ={b}_{k,j}$ and  $v_n=e^{j\theta_n}$, $n=1,\cdots, N$. By applying the change of variables  $\bm{h}^H_{r,k}\ttheta\bm{G}\bm{w}_j =\bm{v}^H\bm{a}_{k,j}$ where $\bm{v} = [e^{j\theta_1}, \cdots, e^{j\theta_N}]^H$ and $\bm{a}_{k,j}=\text{diag}(\bm{h}^H_{r,k})\bm{G}\bm{w}_j$,   problem (P1) is reduced to
\begin{align}\label{secIV:p3}
\text{Find}~~ &~~ \bm{v} \\
\mathrm{s.t.}~~&\frac{|\bm{v}^H\bm{a}_{k,k} + b_{k,k}  |^2}{\sum_{j\neq k}^{K}|\bm{v}^H\bm{a}_{k,j} + b_{k,j}  |^2 +  \sigma^2_k}\geq \gamma_k, \forall k, \label{eq:coupling2}\\
& |v_n|=1,  n=1,\cdots, N. \label{eq:modulus1}
\end{align}
{While the above problem appears similar to the relay beamforming optimization problem  for multi-antenna relay broadcast channel \cite{zhang2009joint}, it cannot be directly transformed into an SOCP optimization problem because  the phase rotation of the common vector $\vv$ may not render $\bm{v}^H\bm{a}_{k,k} + b_{k,k}$'s in \eqref{eq:coupling2} to be real numbers for all users.  Moreover,  it has non-convex unit-modulus constraints in \eqref{eq:modulus1}. However, by observing that constraints  \eqref{eq:coupling2} and \eqref{eq:modulus1} can be transformed into quadratic constraints, we apply the SDR technique to approximately solve  problem \eqref{secIV:p3} efficiently.}

Specifically, by introducing an auxiliary variable $t$, \eqref{secIV:p3} can be equivalently written as
\begin{align}\label{secIV:p4}
\text{Find} ~~~~&\bm{v}\\
\mathrm{s.t.}~~~~& \bm{\bar v}^H\R_{k,k}{\bm {\bar v}} + |b_{k,k}|^2 \geq \gamma_k\sum_{j\neq k}^{K}   \bm{\bar v}^H\R_{k,j}{\bm {\bar v}}  + \gamma_k(\sum_{j\neq k}^{K} |b_{k,j}|^2 + \sigma^2_k),    \forall k,\label{P10:qos}\\
& |v_n|^2=1,  n=1,\cdots, N+1,
\end{align}
where
\[
\bm{R}_{k,j}=\begin{bmatrix}
\bm{a}_{k,j}\bm{a}_{k,j}^H  & \bm{a}_{k,j}b^H_{k,j} \\
 \bm{a}_{k,j}^Hb_{k,j} & 0 \\
\end{bmatrix},~~
\bm{\bar{v}}=\begin{bmatrix}
\bm{v}  \\
t \\
\end{bmatrix}.
\]
Note that $\bm{\bar{v}}^H\bm{R}_{k,j}\bm{\bar{v}}={\rm{tr}}(\bm{R}_{k,j}\bm{\bar{v}}\bm{\bar{v}}^H)  $.
Define $\bm{V}=\bm{\bar{v}}\bm{\bar{v}}^H$, which needs to satisfy  $\bm{V}\succeq \bm{0}$ and ${\rm{rank}}(\bm{V})=1$. Since the rank-one constraint is non-convex, we relax this constraint and problem \eqref{secIV:p4} is then transformed to
\begin{align}
\text{(P4)}: ~~\text{Find} ~~~~&\bm{V}\\
\mathrm{s.t.}~~~~&{\rm{tr}}( {\bm{R}_{k,k}}{\bm V})  + |b_{k,k}|^2 \geq \gamma_k\sum_{j\neq k}^{K}   {\rm{tr}}( {\bm{R}_{k,j}}{\bm V})  + \gamma_k(\sum_{j\neq k}^{K} |b_{k,j}|^2 + \sigma^2_k),   \forall k,\label{P6:SINR:39}\\
~~~~& \bm{V}_{n,n} = 1,  n=1,\cdots, N+1, \label{P6:C9} \\
~~~~&\bm{V} \succeq 0.  \label{P6:C10}
\end{align}
It is not difficult to observe that problem (P4) is an  SDP and hence it can be optimally solved by existing convex optimization solvers such as CVX \cite{cvx}. While the SDR may not be tight for problem \eqref{secIV:p4},  the Gaussian randomization can be  similarly used  to obtain a feasible solution to problem \eqref{secIV:p4} based on the higher-rank solution obtained by solving (P4).
 In addition, it is worth pointing out that the SINR constraints in \eqref{P6:SINR:39} are not necessarily to be met with equality for a feasible solution of (P4), due to the common phase shifting matrix ($\V$) for all users.

 In the proposed  alternating optimization algorithm, we solve problem (P1) by solving problems (P3) and (P4) alternately in an iterative manner, where the solution obtained  in each iteration is used as the initial point  of the next iteration. The details of the proposed algorithm are summarized in Algorithm \ref{Alg:MMSE}.  In particular, the algorithm starts with solving problem (P3) for given $\bm{\theta}$ instead of solving (P4) for given  $\W$. This is deliberately designed since (P3) is always feasible for any arbitrary $\bm{\theta}$, provided that ${\rm{rank}}(\G^H\ttheta \HH_r + \HH_d)=K$, while this may not be true for (P4) with arbitrary $\W$.
On the other hand,  as solving (P4) only attains a feasible solution,  it remains unknown whether the objective value of (P3) will monotonically decrease or not over iterations in Algorithm \ref{Alg:MMSE}.  Intuitively, if the feasible solution obtained by solving (P4) achieves a strictly larger user SINR than the corresponding SINR target $\gamma_k$ for user $k$, then the transmit power of user $k$ and hence the total transmit power in problem (P3) can be properly reduced without violating all the SINR constraints. More rigorously, the convergence of Algorithm 1 is ensured  by the following proposition.
\begin{proposition}
The objective value of (P3) is non-increasing  over the iterations by applying Algorithm \ref{Alg:MMSE}.
\end{proposition}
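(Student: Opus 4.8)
The plan is to show that one full round of Algorithm~\ref{Alg:MMSE} --- first re-optimizing the precoder $\W$ via (P3) for the current phase shifts, then updating the phase shifts via (P4) for the current $\W$ --- cannot increase the objective $\sum_{k}\|\bm{w}_k\|^2$. The crucial observation is that (P4) (and, more precisely, problem \eqref{secIV:p3}) is a \emph{feasibility} problem: whatever $\bm{\theta}$ the procedure returns --- including the $\bm{\theta}$ recovered by Gaussian randomization from the relaxed SDP --- satisfies all the SINR constraints in \eqref{eq:coupling2} with the \emph{current} $\W$ held fixed. Hence the precoder that was feasible before the phase update remains feasible after it.

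First I would fix notation. Let $\W^{(t)}$ denote the precoder obtained by solving (P3) in the $t$-th iteration with phase shifts $\bm{\theta}^{(t-1)}$, let $p^{(t)}=\sum_{k=1}^{K}\|\bm{w}_k^{(t)}\|^2$ be the corresponding optimal value, and let $\bm{\theta}^{(t)}$ be the phase shifts obtained by subsequently solving (P4) with $\W=\W^{(t)}$. By the feasibility property just stated, the pair $(\W^{(t)},\bm{\theta}^{(t)})$ is feasible for (P1). In particular, $\W^{(t)}$ is a feasible point of the (P3) subproblem associated with the fixed phase shifts $\bm{\theta}^{(t)}$ (this also shows that subproblem is nonempty, so the next call to (P3) is well defined --- note that the rank condition on $\G^{H}\ttheta\HH_r+\HH_d$ invoked in the algorithm description is not even needed here). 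Letting $\W^{(t+1)}$ be the optimal solution of that subproblem, optimality gives
\[
p^{(t+1)}\;=\;\sum_{k=1}^{K}\|\bm{w}_k^{(t+1)}\|^2\;\le\;\sum_{k=1}^{K}\|\bm{w}_k^{(t)}\|^2\;=\;p^{(t)}.
\]
Iterating over $t$ yields the claimed monotone non-increase of the objective of (P3) along the algorithm. I would then close by remarking that, combined with the trivial lower bound $\sum_{k}\|\bm{w}_k\|^2\ge 0$, this monotonicity implies the objective sequence converges --- which is exactly how the proposition is used in the convergence argument for Algorithm~\ref{Alg:MMSE}.

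The one place where care is required --- and the main obstacle --- is justifying the step ``any $\bm{\theta}$ returned by (P4) is feasible for the original SINR constraints with the current $\W$.'' Since (P4) is the SDR of \eqref{secIV:p3}, one must be explicit that the randomization/feasibility-restoration step is carried out so as to produce a point satisfying the true constraints \eqref{eq:coupling2}--\eqref{eq:modulus1}, not merely the relaxed constraints \eqref{P6:SINR:39}--\eqref{P6:C10}; once that is granted the argument is immediate. No convexity, strong-duality, or per-subproblem \emph{optimality} of the phase update is needed here --- in contrast to the single-user alternating algorithm, mere feasibility of the (P4) step suffices to carry the monotonicity through.
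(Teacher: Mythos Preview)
Your argument is correct and follows essentially the same route as the paper: the key point in both is that any $\bm{\theta}$ returned by the (P4) step keeps the current $\W$ feasible for (P3), so re-optimizing $\W$ can only decrease (or leave unchanged) the objective. Your explicit caveat that the Gaussian randomization must output a point satisfying the \emph{original} constraints \eqref{eq:coupling2}--\eqref{eq:modulus1}, not merely the relaxed ones, is a worthwhile clarification that the paper's proof leaves implicit.
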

\begin{proof}
Denote the objective value of (P3) based on a feasible solution $({\bm{\theta}}, \W)$ as $f({\bm{\theta}}, \W)$. As shown in step 4 of Algorithm 1, if there exists a feasible solution to problem (P4), i.e.,  $({\bm{\theta}}^{r+1}, \W^{r})$ exists, it is also feasible to problem (P3).  As such, $({\bm{\theta}}^r, \W^{r})$ and $ ({\bm{\theta}}^{r+1}, \W^{r+1})$ in step 3 are the feasible solutions to  (P3) in the $r$th and $(r+1)$th iterations, respectively. It then follows that
$f({\bm{\theta}}^{r+1}, \W^{r+1})\overset{(a)}{\geq} f({\bm{\theta}}^{r+1}, \W^r)\overset{(b)}{=}  f({\bm{\theta}}^{r}, \W^r)$,
where $(a)$ holds since for given ${\bm{\theta}}^{r+1}$ in step 3 of Algorithm 1, $\W^{r+1}$ is the optimal solution to problem (P3);  and $(b)$ holds because the objective function of (P3) is regardless  of ${\bm{\theta}}$ and only depends on $\W$.
\end{proof}
 \begin{algorithm}[t]
\caption{Alternating optimization algorithm.}\label{Alg:MMSE}
\begin{algorithmic}[1]
\STATE Initialize the phase shifts ${ \bm{\theta}}={\bm{\theta}}^1$ and set the iteration number $r=1$.
\REPEAT
\STATE Solve problem (P3) for  given ${\bm{\theta}}^r$, and denote the optimal solution as $\W^{r}$.
\STATE {Solve problem (P4) or (P4')  for given $\W^{r}$, and denote the solution after performing Gaussian randomization as ${ {\bm{\theta}}^{r+1}}$.}
\STATE Update $r=r+1$.
\UNTIL{  The fractional decrease of the objective value  is below a threshold $\epsilon>0$ or problem (P4)/(P4') becomes  infeasible.}
\end{algorithmic}
\end{algorithm}

To achieve better converged solution, we further transform problem (P4) into an optimization problem with an explicit objective to obtain a generally more efficient  phase shift solution to reduce the transmit power. The rationale is that for the transmit beamforming optimization problem, i.e., (P3), all the SINR constraints are active at the optimal solution. As such, optimizing the phase shift to enforce the user achievable  SINR to be larger than the SINR target in  (P4) directly leads to the transmit power reduction in  (P3) (e.g., by simply scaling down the power of transmit beamforming). To this end, problem (P4) is transformed into the following problem
\begin{align}
 \text{(P4')}:  ~\max_{{\bm V}, \{\alpha_k\},  } ~&\sum_{k=1}^{K} \alpha_k \\
\mathrm{s.t.}~~&{\rm{tr}}( {\bm{R}_{k,k}}{\bm V})  + |b_{k,k}|^2 \geq \gamma_k\sum_{j\neq k}^{K}   {\rm{tr}}( {\bm{R}_{k,j}}{\bm V})  + \gamma_k(\sum_{j\neq k}^{K} |b_{k,j}|^2 + \sigma^2_k) + \alpha_k,   \forall k,\label{P6:SINR}\\
~~& \bm{V}_{n,n} = 1,  n=1,\cdots, N+1, \label{P6:C9} \\
~~&\bm{V} \succeq 0,  \alpha_k \geq  0,   \forall k, \label{P6:C10}
\end{align}
where the slack variable $\alpha_k$ can be interpreted as the ``SINR residual'' of user $k$ in phase shift optimization.
Note that (P4) and (P4') have the same set of feasible   $\V$, while (P4') is more efficient than (P4) in terms of the converged solution, as will be verified in Section V-B by simulation. 

%
\vspace{-0.4cm}
\subsection{Two-Stage Algorithm }
Inspired by the combined channel gain maximization problem \eqref{secIII:p3} in the single-user case, we next propose a two-stage algorithm with lower complexity compared to the alternating optimization algorithm  by decoupling the joint beamforming design problem (P1) into two beamforming subproblems, for optimizing the phase shifts and transmit beamforming, respectively.  Specifically,  the phase shifts at the IRS are optimized in the first stage by solving a weighted effective channel gain maximization problem. This aims to align with the phases of different user channels so as to maximize the beamforming gain of the IRS, especially for the users near to the IRS.  In the second stage, we solve problem (P3) to obtain the optimal MMSE-based transmit beamforming with given phase shifts $ \bm{\theta}$.

Let  $\bm{v} = [e^{j\theta_1}, \cdots, e^{j\theta_N}]^H \in \mathbb{C}^{N \times 1}$ and $\bm{\Phi}_k =  \text{diag}(\bm{h}^H_{r,k})\bm{G}  \in \mathbb{C}^{N \times M}$, $\forall k$. The weighted sum of the combined channel gain of all users is expressed as
\begin{align}\label{eq:obj}
\sum_{k=1}^{K}t_k\| \bm{h}^H_{r,k}\ttheta\bm{G}+\bm{h}^H_{d,k}  \|^2= &\sum_{k=1}^{K}t_k\|\vv^H \text{diag}(\bm{h}^H_{r,k})\bm{G}  + \bm{h}^H_{d,k}  \|^2 = \sum_{k=1}^{K}t_k\|\vv^H \bm{\Phi}_k + \bm{h}^H_{d,k}  \|^2,  
  \end{align}
  where we set the weights to be $t_k = \frac{1}{ \gamma_k\sigma^2_k}, k=1,...,K$, motivated by constraint \eqref{SecIII:SNRconstraint0}. Based on \eqref{eq:obj}, the phase shifts can be obtained by solving the following problem
\begin{align}
\text{(P5)}: ~~\max_{\vv} ~~~&\sum_{k=1}^{K}t_k\|\vv^H \bm{\Phi}_k + \bm{h}^H_{d,k}  \|^2 \\
\mathrm{s.t.}~~~~&  |v_n|=1,  n=1,\cdots, N. \label{eq:modulus2}
\end{align}
Note that for $K=1$,  (P5) is equivalent to problem \eqref{secIII:p3} for the single-user case in Section III-A.
However, in the multiuser case, due to the same set of phase shifts applied  for all users with different channels, the combined channel power gains of different users cannot be maximized at the same time in general, which thus need to be balanced for optimally solving (P5). Nevertheless, since problem (P5) is a non-convex QCQP, it can be similarly reformulated as a homogeneous QCQP as in Section III-A and then solved by applying the SDR and Gaussian randomization techniques. The details are omitted here for brevity. With the  phase shifts obtained from (P5), the MMSE-based transmit bemaforming is then obtained by solving (P3).
{Compared to the alternating optimization based  algorithm proposed in Section \ref{multiuser:sec}-A, the two-stage algorithm has lower computational complexity as (P5) and (P3) only need to be respectively solved for one time, but may suffer from certain performance loss, which will be evaluated in the next section.}


\section{Simulation Results}\label{simulation:sec}
We consider a three-dimensional (3D) coordinate system where a uniform linear array (ULA) at the AP and a uniform rectangular array (URA) at the IRS are located in $x$-axis and $x$-$z$ plane, respectively. The antenna spacing is half wavelength and the reference (center) antennas at the AP and IRS are respectively  located at $(0, 0, 0)$ and  $(0, d_0, 0)$, where $d_0>0$ is the distance between them.  For the IRS, we set $N=N_{x}N_{z}$ where $N_{x}$ and $N_{z}$ denote the numbers of reflecting elements along the $x$-axis and  $z$-axis, respectively.   For the purpose of exposition, we fix $N_x=5$ and increase $N_z$ linearly with $N$.  The distance-dependent path loss model is given by
\begin{align}\label{pathloss}
L(d) = C_0\left( \frac{d}{D_0} \right)^{-\alpha},
\end{align}
where $C_0$ is the path loss at the reference distance $D_0=1$ meter (m), $d$ denotes the individual link distance, and $\alpha$ denotes the path loss exponent.
To account for  small-scale fading, we assume the Rician fading channel model for all channels involved.  Thus, the AP-IRS channel $\G$ is given by
 \begin{align}
 \G = \sqrt{\frac{ \beta_{\rm AI} }{1+  \beta_{\rm AI} }}\G^{\rm LoS} +  \sqrt{\frac{1}{1+\beta_{\rm AI}}}\G^{\rm NLoS},
 \end{align}
 where $\beta_{\rm AI}$ is the Rician factor, and $\G^{\rm LoS} $ and $\G^{\rm NLoS}$ represent  the deterministic LoS (specular) and Rayleigh fading components, respectively. In particular, the above model is reduced to the  LoS channel when $ \beta_{\rm AI} \rightarrow  \infty$ or  Rayleigh fading channel when  $ \beta_{\rm AI} = 0$.
  The elements in $ \G$ are then multiplied by the square root of the distance-dependent path loss in \eqref{pathloss} with the path loss exponent denoted by   $\alpha_{\rm AI}$. The AP-user and IRS-user channels are also generated by following the similar procedure.  The path loss exponents of the AP-user and IRS-user links are  denoted by $\alpha_{\rm Au}$ and  $\alpha_{\rm Iu}$, respectively, and the Rician factors of the two links are denoted by  $\beta_{\rm Au}$ and $\beta_{\rm Iu}$, respectively.
  Due to the relatively large  distance and random scattering  of the AP-user channel, we set $\alpha_{\rm Au}=3.5$ and $\beta_{\rm Au}=0$ while their counterparts for AP-IRS and IRS-user channels will be specified later to study their effects on the system performance.
  Without loss of generality, we assume that all users have the same SINR target, i.e., $\gamma_k=\gamma$, $\forall k$. {The number of random vectors used for the Gaussian randomization is set to be 1000 and the stopping threshold for the alternating optimization algorithms  is set as $\epsilon=10^{-4}$.}  Other system parameters are set as follows: $C_0= -30$ dB,  $\sigma_k^2=-80$\,dBm, $\forall k$,  and  $d_0= 51$ m (if not specified otherwise).

\vspace{-0.4cm}
\subsection{Single-User System}

\begin{figure*}[!t]
\centering
\includegraphics[width=3in, height=1in]{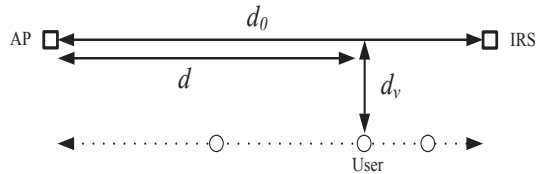}  
\caption{Simulation setup of the single-user case (top view).}
 \label{simulation:setup}
\end{figure*}


First, we consider a single-user system with the user SNR target $\gamma=10$ dB and $M=4$. As shown in Fig. \ref{simulation:setup}, the user lies on a horizontal line that is in parallel to the one that connects the reference antennas of AP and IRS, with the vertical distance between these two lines being  $d_v = 2$ m. Denote the horizontal distance between the AP and  user  by $d$ m. Accordingly, the AP-user and IRS-user link distances are given by $d_1= \sqrt{d^2+d_v^2}$ and $d_2= \sqrt{(d_0-d)^2+d_v^2}$, respectively.  By varying the value of $d$, we can study the transmit power required for serving the user located between the AP and  IRS, under the given SNR target.
The path loss exponents and Rician factors are set as $\alpha_{\rm AI} =2$, $\alpha_{\rm Iu}=2.8$, $\beta_{\rm Iu}=0$, and  $\beta_{\rm AI} = \infty$, respectively, where  $\G$ is of rank one, i.e., an LoS channel between the AP and IRS.
We compare the following  schemes: 1) Lower bound: the minimum transmit power based on the optimal solution of the SDP problem \eqref{secIII:p6}; 2) SDR: the solution obtained by applying SDR and Gaussian randomization techniques in Section III-A; 3) Alternating optimization: the solution proposed in Section III-B; 4) AP-user MRT: we set $\bm{\bar w} = {{\bm{h}_d}}/{\|{\bm{h}_d}\|}$ to achieve MRT based on the AP-user direct channel;
 5) AP-IRS MRT: we set $\bm{\bar w} ={{\bm{g}}}/{\|\bm{g}\|}$ to achieve MRT based on the AP-IRS  rank-one channel, with $\bm{g}^H$ denoting any row in $\G$;  6) Random phase shift: we set  the elements in $\bm{\theta}$ randomly in $[0, 2\pi]$ and then perform MRT at the AP based on the combined channel;
  7) Benchmark scheme without the IRS by setting  $\bm{w} = \sqrt{\gamma \sigma^2}{\bm{h}_d}/{\|\bm{h}_d\|^2}$. Note that for scheme 3), the transmit beamforming is initialized by using the  AP-user MRT, and for schemes 4) and 5) with given $\bm{\bar w}$, the transmit power and phase shifts are optimized by using the results in Section III-B.

\subsubsection{AP Transmit Power versus AP-User Distance}
\begin{figure*}[!t]
\centering
\includegraphics[width=0.6\textwidth]{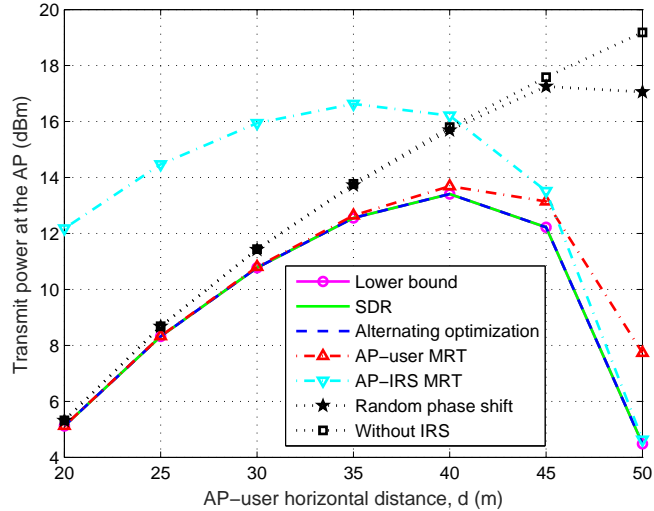}
\caption{AP transmit power versus AP-user horizontal distance}
\label{simulation:distance}
\end{figure*}

%
 In Fig. \ref{simulation:distance}, we compare the  transmit power required by all schemes versus the horizontal distance between the AP and user, $d$. First, it is observed that the proposed two schemes both achieve near-optimal transmit power as compared to the transmit power  lower bound, and also significantly outperform other benchmark schemes. Second, for the scheme without the IRS, one can observe that  the user farther away from the AP requires higher transmit power at the AP due to the larger signal attenuation.
  However, this problem is alleviated by deploying an IRS, which implies that a larger AP-user distance does not necessarily lead to a higher transmit power in IRS-aided wireless networks. This is because the user farther away from the AP may be closer to the IRS and thus it is able to receive stronger reflected signal from the IRS. As a result, the user near either the AP (e.g., $d=23$ m)  or IRS (e.g., $d=47$ m) requires  lower transmit power than a user far away from both of them (e.g., $d=40$ m).
  This phenomenon also suggests that the signal coverage can be effectively extended by deploying only a passive IRS rather than installing an additional AP or active relay. For example, for the same transmit power about 13 dBm, the coverage of the network without the IRS is about 33 m whereas this value is improved to be beyond 50 m by applying the proposed joint beamforming designs with an IRS.

  On the other hand, it is observed from Fig. \ref{simulation:distance} that the AP-user MRT scheme performs close to optimal when the user is nearer to the AP, while it incurs considerably higher transmit power when the user is nearer to the IRS. This is expected since in the former case, the user received signal is dominated by the AP-user direct link whereas the IRS-user link is dominant in the latter case. Moreover, it can be observed that the AP-IRS MRT scheme behaves oppositely as the user moves away from the AP toward IRS.  Finally, Fig. \ref{simulation:distance} also shows  that if the transmit beamforming is not designed properly, the performance achieved by using the IRS may be even worse than that of the case without the IRS, e.g., with the AP-IRS MRT scheme for $d\leq 35$ m. This further demonstrates that the proposed joint beamforming designs  can dynamically adjust the AP's beamforming to strike an optimal  balance between the signal power transmitted  directly  to  the user and  that to the IRS, to achieve the maximum received power at the user.

\subsubsection{AP Transmit Power versus Number of Reflecting Elements}
\begin{figure}[ht]\center
\subfigure[$d = 50$ m]{\includegraphics[width=0.495\textwidth]{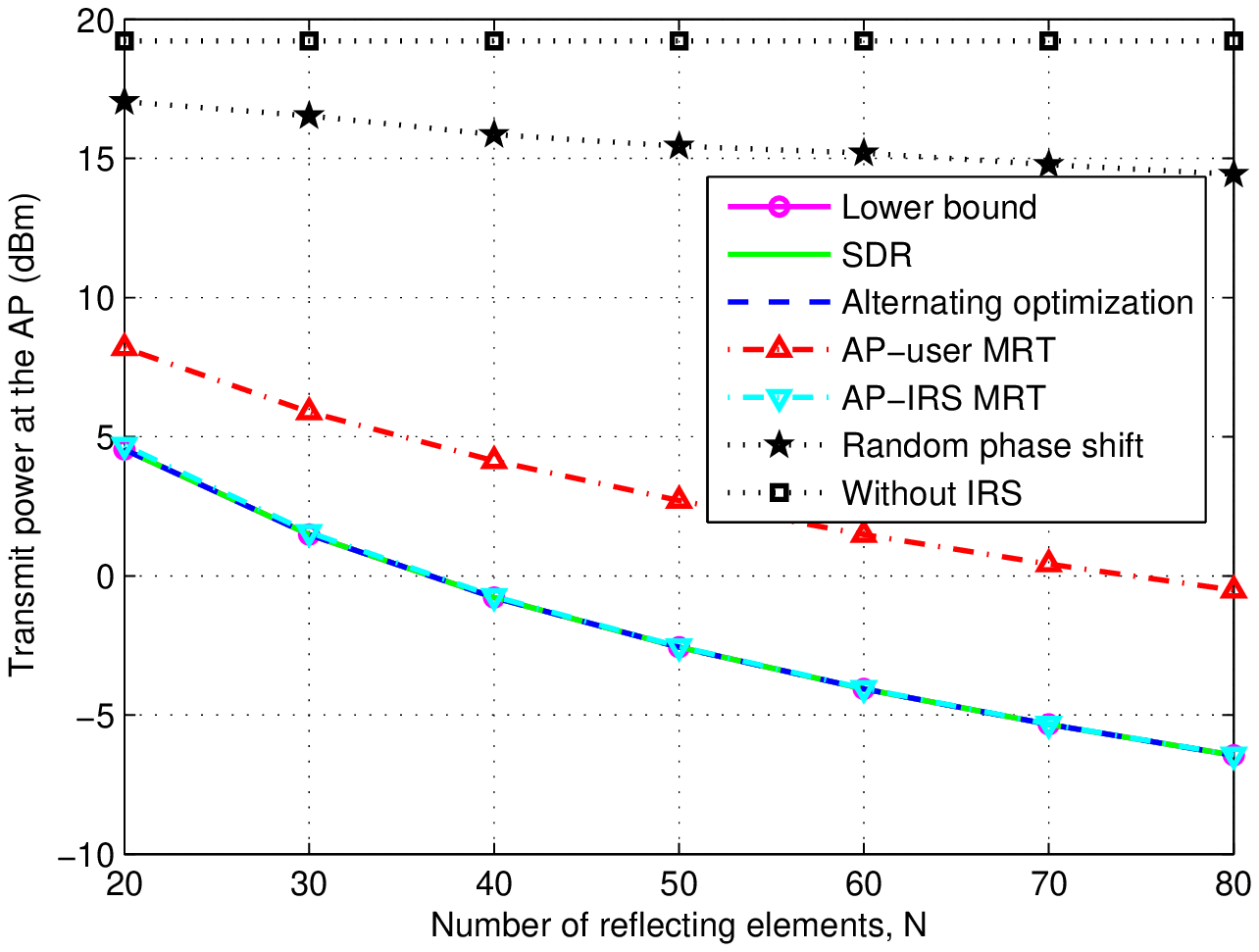}}
\subfigure[$d = 41$ m]{\includegraphics[width=0.495\textwidth]{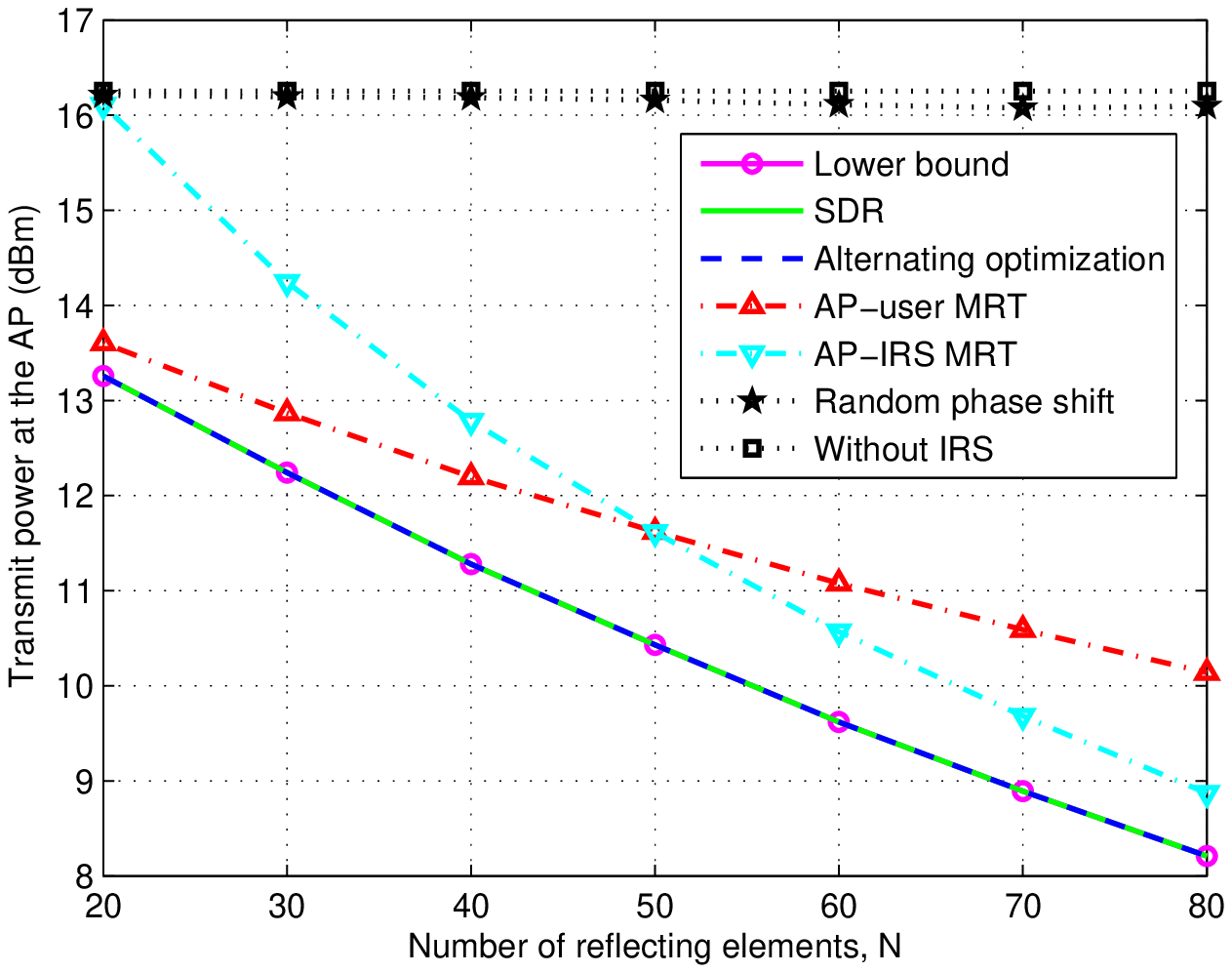}} 
\subfigure[$d = 15$ m]{\includegraphics[width=0.495\textwidth]{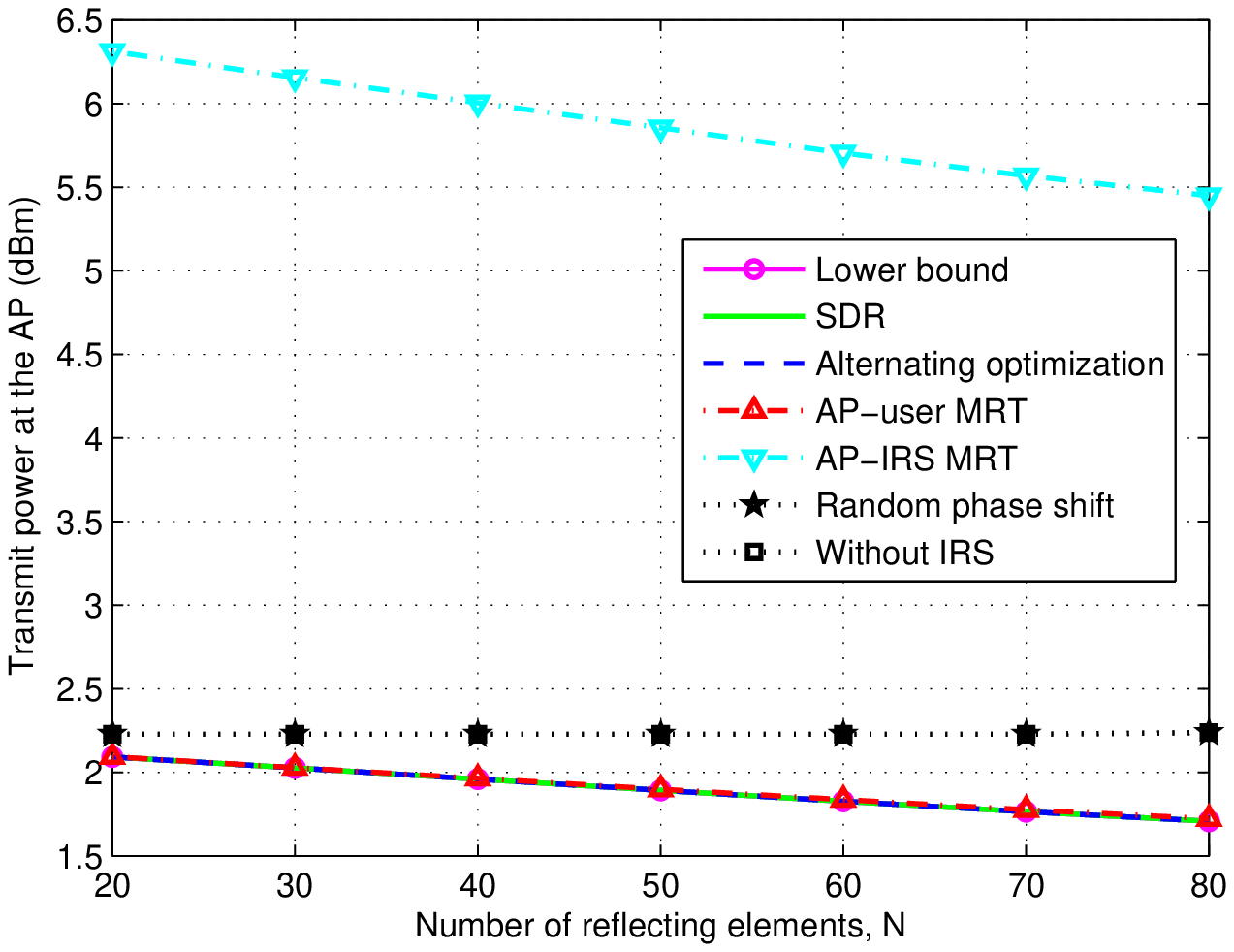}} 
\caption{AP transmit power versus the number of reflecting elements at the IRS, $N$. } \label{simulation:N} \vspace{-6mm}
\end{figure}

In Fig. \ref{simulation:N}, we compare the AP's transmit power of all the above schemes versus the number of reflecting elements at the IRS when $d=50$, $41$, and $15$ m, respectively. From Fig. \ref{simulation:N} (a), it is observed that for the case of $d=50$ m (i.e., the user is very close to the IRS), the AP-IRS MRT scheme achieves near-optimal transmit power since the signal reflected by the IRS is much stronger than that directly from the AP at the user. Furthermore, it is interesting to note that  the transmit power required by the proposed schemes scales down with the number of reflecting elements $N$ approximately in the order of  $N^2$ in this case, which is in accordance with Proposition \ref{scaling:law} (even when the AP-IRS channel is LoS rather than Rayleigh fading). For example, for the same user SNR,  a transmit power of 2 dBm is required at the AP when $N=30$ while this value is reduced to $-4$ dBm when  $N=60$, which suggests an around  6 dB gain by doubling the number of reflecting elements.
  In contrast, the transmit power required  by using the random phase shift  decreases with increasing $N$  in a much slower rate,  because without reflect beamforming the average signal power of the reflected signal is comparable to that of the signal from the AP-user direct link in this case.  Finally, it is observed that the above  gains diminish as the user moves away from the IRS. For example, for the case of $d=15$ m shown in Fig. \ref{simulation:N} (c)  where the AP-user direct link signal is much stronger than that of the IRS-user link, the required transmit power is insensitive to the number of reflecting elements.
For the case of $d=41$ m shown in Fig. \ref{simulation:N} (b) when the user is neither close to the AP nor close to the IRS, it is observed that the transmit power gain of the proposed schemes  is generally lower than $N^2$. This is because in this case the signal power received at the IRS is compromised as the AP transmit beamforming is steered to strike a balance between the AP-IRS link and the AP-user direct link. In practice, the number of reflecting elements can be properly selected depending on the IRS's location as well as the target user SNR/AP coverage range.
\begin{figure*}[!t]
\centering
\includegraphics[width=0.6\textwidth]{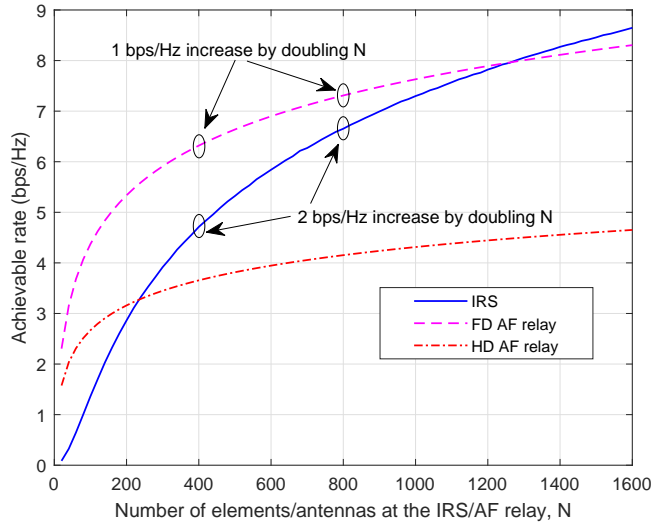}  
\caption{{Comparison between the IRS and the FD/HD AF relay.} }
 \label{simulation:AFrelay}
\end{figure*}

\subsubsection{Comparison with AF Relay}
{Next, we compare the achievable rates of the IRS versus the FD AF relay based on the results derived in Section III. We consider the setup in Fig. \ref{simulation:setup} with $d_0=d=100$ m, $d_v=1$ m, ${\sigma}^2_r=-80$ dBm, $\alpha_{\rm AI} =3.2$, $\alpha_{\rm Iu}=2$, $\beta_{\rm AI} = 0$,  $\beta_{\rm Iu}=\infty$, and $M=1$. To focus on the comparison with large $N$, the direct link from the AP to the user is ignored, and perfect SIC is assumed for the FD AF relay. As such, the SNRs and the corresponding achievable rates for the IRS-aided  and the FD AF relay-aided systems can be obtained based on \eqref{IRS:receive:power} and \eqref{SNR:AF}, respectively. For a fair comparison, we assume that both systems have the same total transmit power budget $P=5$ mW (for single link only). Since the IRS is passive, all the transmit power is used at the AP, whereas since the AF relay is active like the AP, an optimal power allocation between them is required which can be obtained by exhaustive search. }

{Under the above setup, we plot the achievable rate in bits/second/Hertz (bps/Hz) versus $N$ in Fig. \ref{simulation:AFrelay}, where the HD AF relay is also considered as a benchmark. It is observed that when $N$ is small, the IRS-aided system is able to achieve the same rate as the FD/HD AF relay-aided system by using more reflecting elements. However, since the IRS's elements are passive, no transmit RF chains are needed for them and thus the cost is much lower as compared with that of active antennas for the AF relay requiring transmit RF chains. Furthermore, one can observe that by doubling $N$ from 400 to 800, the achievable rate of using IRS increases about 2 bps/Hz whereas that of using the FD AF relay only increases about 1 bps/Hz. This is due to their different SNR gains ($N^2$ versus $N$) with increasing $N$ as revealed in Propositions 2 and 3. As a result, it is expected that the IRS-aided system will eventually outperform the FD/HD AF relay-aided system when $N$ is sufficiently large, as shown in Fig. \ref{simulation:AFrelay}.}
\subsection{Multiuser System}
\begin{figure*}[!t]
\centering
\includegraphics[width=3.3in, height=1.6in]{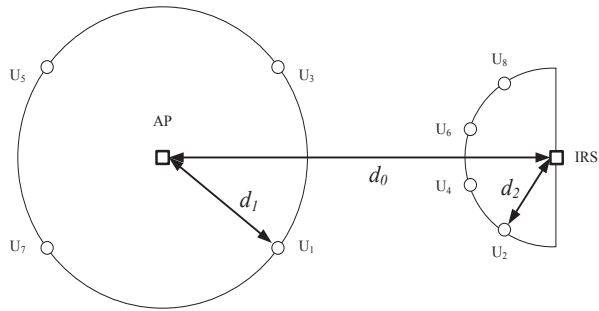}
\caption{Simulation setup of the multiuser case  (top view).}\label{simulation:MU}
\end{figure*}
 Next, we consider a multiuser system with eight users, denoted by $U_k$, $k=1,\cdots, 8$, and their locations are shown in Fig. \ref{simulation:MU}. Specifically, $U_k$'s, $k=2,4, 6, 8$, lie evenly on a half circle centered at  the reference antenna  of the IRS  with radius $d_2=3$ m, which are usually  considered as ``cell-edge'' users, as compared to $U_k$'s, $k=1,3, 5, 7$, which lie evenly on a circle centered at  the  reference antenna  of the AP  with radius $d_1=20$ m. Since the IRS can be practically  deployed in LoS with the AP and ``cell-edge'' users,  we  set   $\alpha_{\rm AI} =\alpha_{\rm Iu}=2.8$, $\beta_{\rm AI} =\beta_{\rm Iu}=3$ dB, respectively. We compare our proposed two algorithms (named as Alternating optimization w/ IRS and Two-stage algorithm w/ IRS, respectively) in Section IV with the two conventional designs in the case without the IRS, i.e.,  MMSE and zero-forcing (ZF) based beamforming \cite{bengtsson2001handbook,yoo2006optimality}. Specifically, the transmit power of the MMSE-based scheme without the IRS is obtained by solving (P3) with $\ttheta = {\bm 0}$, while that of the ZF-based scheme without the IRS is given by ${\rm{tr}}(\PP ( \HH_d^H \HH_d)^{-1})$ where $\PP = \text{diag}( \sigma^2_1\gamma_1,\cdots, \sigma^2_K\gamma_K)$. The transmit power required  by using the random phase shift at the IRS and MMSE beamforming at the AP is also plotted as a benchmark.
  Before comparing their performances, we first show the convergence behaviour of the proposed Algorithm \ref{Alg:MMSE} in Fig. \ref{convergence} by setting $M=4$ and considering that only $U_k$, $k=1,2,3,4$, are active  (need to be served) with  $\gamma=20$ dB.  The phase shifts are initialized using the  two-stage algorithm. It is observed  that the transmit power required  by the proposed algorithm decreases quickly with the number of iterations and solving (P4') instead of (P4) in Algorithm \ref{Alg:MMSE} achieves lower converged power. Thus, in the following simulations, we use (P4') instead of (P4) in Algorithm 1.


\begin{figure*}[!t]
\begin{minipage}[t]{0.5\linewidth}
\centering
\includegraphics[width=3.3in, height=2.4in]{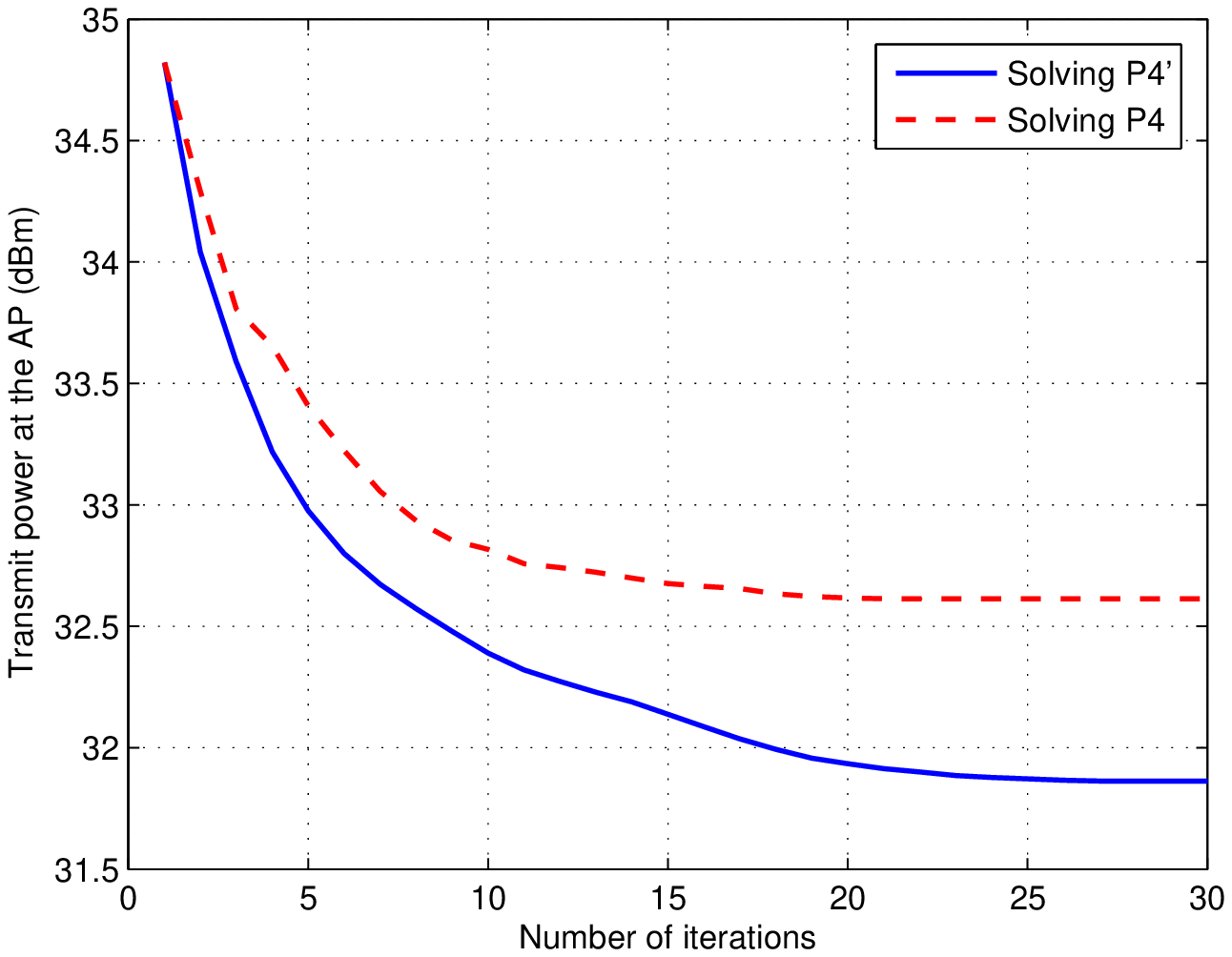}  
\caption{Convergence behaviour of the proposed alternating optimization  algorithm.}
 \label{convergence}
\end{minipage}%
~~~~\begin{minipage}[t]{0.5\linewidth}
\centering
\includegraphics[width=3.3in, height=2.4in]{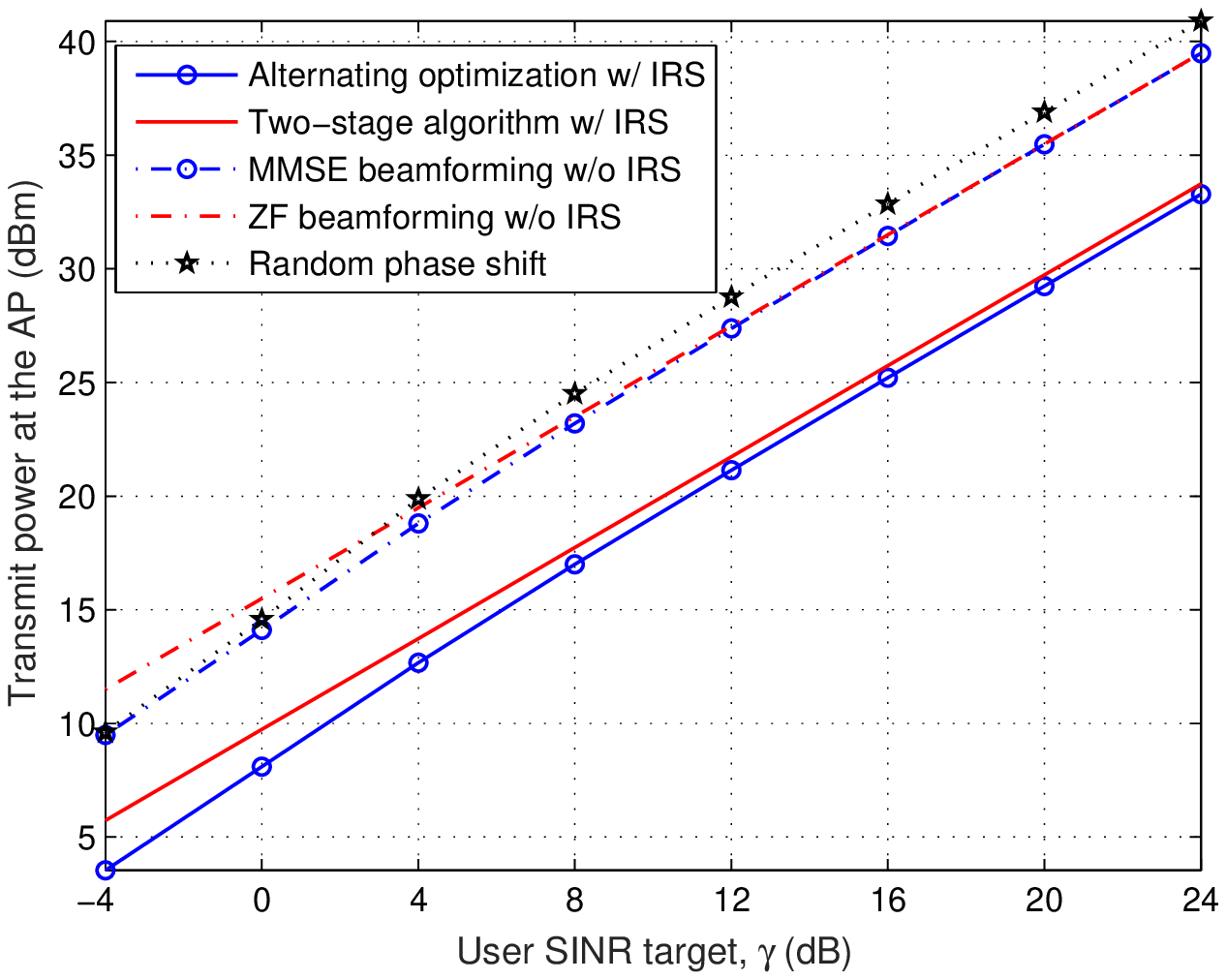} 
\caption{AP's transmit power versus the SINR target for the two-user case.}
\label{transmit:pow}
\end{minipage}\vspace{-8mm}
\end{figure*}


\begin{figure}[t]
\centering
\subfigure[Signal and interference powers at user 1 (far from the IRS).]{\includegraphics[width=0.415\textwidth]{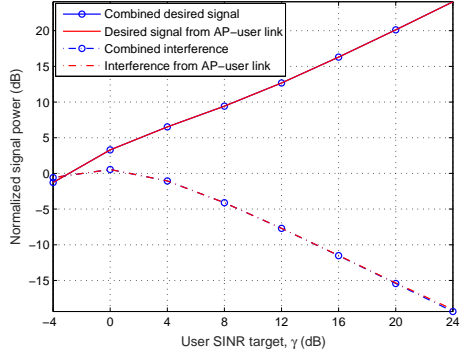}}
\subfigure[Signal and interference powers at user 2 (near the IRS). ]{\includegraphics[width=0.415\textwidth]{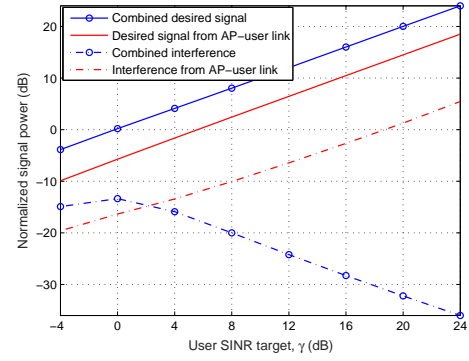}} 
\subfigure[$\rho_1$ ]{\includegraphics[width=0.415\textwidth]{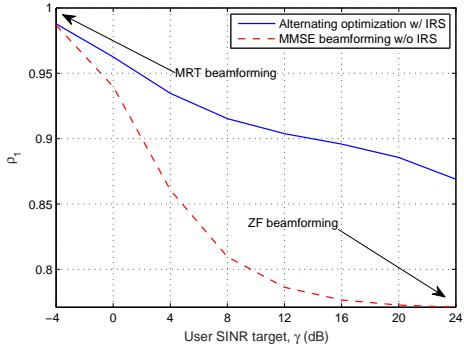}} 
\subfigure[ $\rho_2$ ]{\includegraphics[width=0.415\textwidth]{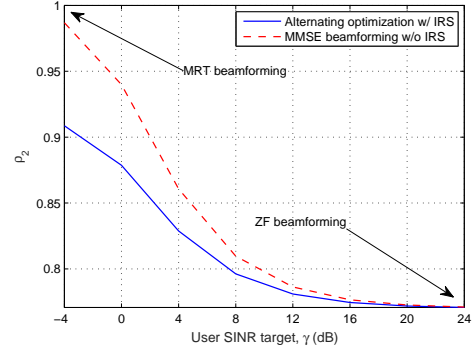}} 
\caption{Illustration for the collaborative working mechanism of the AP and IRS. } \label{simulation:pow2} \vspace{-8mm}
\end{figure}

\subsubsection{AP Transmit Power versus User SINR Target}
In Fig. \ref{transmit:pow}, we show the transmit power at the AP versus the SINR target by considering that only two users, namely $U_k$'s, $k=1,2$, are active, which are far from and near the IRS, respectively. It is observed that by adding the IRS,  the transmit power required by applying the proposed algorithms is significantly reduced, as compared to the case without the IRS. In addition, one can observe that the transmit power of the proposed two-stage algorithm asymptotically approaches that of the alternating optimization  algorithm as $\gamma$ increases.

Next, we show how the AP and IRS serve the two users by collaboratively steering the transmit beamforming and adjusting the phase shifts based on the proposed Algorithm \ref{Alg:MMSE} (i.e., Alternating optimization w/ IRS). To visualize the transmit beamforming direction at the AP, we use the effective angle between the transmit beamforming (i.e., $\bm{w}_k$) and the AP-user direct channel (i.e., $\bm{h}^H_{d,k}$)  for user $k$, which is defined as \cite{yoo2006optimality,tse2005fundamentals}
 $\rho_k\triangleq \mathbb{E} \left( \frac{| \bm{h}^H_{d,k}\bm{w}_k|}{\|\bm{h}^H_{d,k}\|\|\bm{w}_k\|}\right), \forall k.$
In particular, $\rho_k=1$ when the AP steers $\bm{w}_k$ to align with $\bm{h}^H_{d,k}$ perfectly, i.e., MRT transmission, whereas in general  $\rho_k<1$ when the AP sets $\bm{w}_k$ orthogonal to $\bm{h}^H_{d,j}$, $j\neq k$, i.e., ZF transmission.
Generally, a higher $\rho_k$ implies that the transmit beamforming direction is closer to that of the AP-user direct channel.  Figs. \ref{simulation:pow2} (a) and (b) show both the desired signal power and interference power of the two users, respectively. For each user $k$, we plot the following four power terms: 1) Combined desired signal power, i.e., $|( \bm{h}^H_{r,k}\ttheta \bm{G}+\bm{h}^H_{d,k})\bm{w}_k |^2$; 2) Desired signal power from AP-user link, i.e., $|\bm{h}^H_{d,k}\bm{w}_k |^2$; 3) Combined interference power, i.e.,  $|\bm{h}^H_{r,k}\ttheta \bm{G}+\bm{h}^H_{d,k})\bm{w}_j|^2$, $j\neq k$; 4) Interference power from AP-user link, i.e.,  $ |\bm{h}^H_{d,k}\bm{w}_j|^2$, $j\neq k$.
 For the purpose of exposition, both the desired signal and interference powers are normalized by the noise power. It is observed from Fig. \ref{simulation:pow2} (a)   that for user 1 (far from IRS), the desired signal and interference powers received from the combined channel are almost the same as those from the AP-user direct link. This is expected since user 1 is far away from the IRS and hence the signal (both desired and interference)  from the IRS is negligible. However, the case of user 2 (near IRS)  is quite different from that of user 1. As shown in Fig. \ref{simulation:pow2} (b), the desired signal power from the combined channel is remarkably higher than that from the AP-user direct link, due to the reflect beamforming gain by the IRS. Furthermore, the IRS also helps align the interference from the AP-IRS-user reflect link oppositely with that from the AP-user direct link to suppress the interference at user 2. This is shown  by observing that the interference power from the AP-user direct link monotonically  increases with the SINR target whereas that from the combined channel decreases  as SINR target increases.


In Figs. \ref{simulation:pow2} (c) and (d), we plot $\rho_1$ and $ \rho_2$, respectively, for both the cases with and without the IRS.
For user 1, it is observed in Fig. \ref{simulation:pow2} (c)  that for low SINR (noise-limited) regime,  the AP in the case with the IRS steers its beam direction toward the AP-user direct channel as in the case without the IRS (i.e., MRT beamforming), while for high SINR  (interference-limited) regime where the beam direction in the latter case is adjusted to null out the interference to user 2 (i.e, ZF beamforming), the former case still keeps a high correlation between user 1's beam direction and the corresponding AP-user direct channel. This is expected since the IRS is not able to enhance  the desired signal for user 1 (see Fig.  \ref{simulation:pow2} (a)),  thus keeping a high $\rho_1$ helps reduce the transmit power for serving user 1. However,   user 2 inevitably suffers more  interference from user 1 in  the AP-user direct link (see  Fig.  \ref{simulation:pow2} (b)), which, nevertheless, is significantly suppressed  at user 2, thanks to the IRS-assisted interference cancellation.  In contrast, from Fig. \ref{simulation:pow2} (d) it is observed that   the user 2's beam direction in the case with the IRS is steered toward the AP-user combined channel rather than the AP-user direct channel at low SINR regime, and then converges  to the  same ZF beamforming as in the case without the IRS at high SINR regime. This is expected since the IRS is not able to help cancel the AP-user direct interference at user 1, thus nulling out the interference caused by user 2 is the most effective way for meeting the high SINR target.
From the above, it is concluded that the transmit beamforming directions  for the users with the IRS are drastically different from those in the case without the IRS, depending on their different distances with the IRS.
The above results  further demonstrate the necessity of jointly optimizing the transmit beamforming and phase shifts in IRS-aided  multiuser systems.
  \begin{figure}[ht]
\centering
\includegraphics[width=0.55\textwidth]{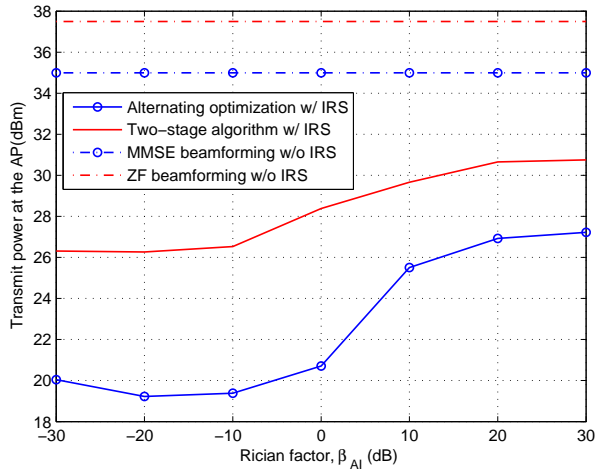}\vspace{-5mm}
\caption{AP's transmit power versus the Rician factor of the AP-IRS channel $\G$.} \label{Figures:K} \vspace{-3mm}
\end{figure}

\subsubsection{AP Transmit Power versus Rician Factor of $\G$}
In Fig. \ref{Figures:K}, we plot the AP's transmit power required by the proposed algorithms with IRS against the benchmark schemes without IRS versus the Rician factor of the AP-IRS channel by assuming all eight users are active and setting $M=8$, $N=40$, and $\gamma = 10$ dB. One can observe that when the average power of the LoS component is comparable to that of the fading component (e.g., $\beta_{\rm AI} \geq -10$ dB),  the transmit power of both  proposed algorithms increases with the increasing Rician factor,  for meeting the same set of SINR targets. This is because for users near the IRS, the reflected signals generally dominate the signals from AP-user direct links. As such, a higher Rician factor of $\G$ results in higher  correlation among the combined channels of these   users, which  is detrimental to the spatial multiplexing gain due to the more severe multiuser interference. The implication of this result is that (somehow surprisingly) it is practically favorable  to deploy the IRS in a relatively rich scattering environment to avoid  strong LoS (low-rank $\G$) with  the AP,  so as to serve multiple users by the IRS which requires sufficient spatial degrees of freedom from the AP to the IRS.

\subsubsection{{Comparison with Massive MIMO}}
{To compare with the existing TDD-based  massive MIMO system (without the use of IRS),  we consider  there are sixteen users with eight users randomly distributed within  $60$ m from the AP and the other eight users randomly distributed within $6$ m from the IRS. Other channel parameters are set to be the same as those for Fig. \ref{transmit:pow}. To facilitate the channel estimation, we assume that the IRS is equipped with receive RF chains. The transmission protocol for the IRS-aided wireless system is described  as follows: 1) all the users send orthogonal pilot signals concurrently as in the TDD-based massive MIMO system; 2) the AP and the IRS estimate the AP-user and IRS-user channels, respectively\footnote{Since in practice the AP and IRS are deployed at fixed locations, we assume for simplicity   that  the channel $\G$ between them is quasi-static and changes much slower as compared to the AP-user and IRS-user channels, and thus is constant and known for the considered communication period of interest.}; 3) the AP starts to transmit data to the  users and in the meanwhile  sends its estimated AP-user channels to the IRS controller via a separate control link (see Fig. \ref{system:model}), so that the IRS can jointly optimize the AP transmit beamforming vectors and its phase shifts by using the proposed algorithms in this paper; 4) the IRS controller sends optimized transmit beamforming vectors to the AP and sets its phase shifts accordingly; and 5) the AP and IRS start to transmit data to the users collaboratively. As such, different from  the massive MIMO system, the IRS-aided system generally incurs additional delay  in steps 3) and 4) due to information exchange and algorithm computation. We denote the channel coherence time as $T_c$  and the total delay caused by 3) and 4) as $\tau$, where we assume that $\tau<T_c$. The delay ratio is thus given by $\rho=\frac{\tau}{T_c}$. Since steps 1) and 2) are required in both the IRS-aided system and massive MIMO system, the time overhead for channel training  is omitted for both schemes for a fair comparison.
Note that users are served by the AP only over $\tau$  with the achievable SINR of user $k$, $k\in \mathcal{K}$, denoted by $\text{SINR}^1_k$, while when  they are served by both the AP and IRS during the remaining time of $T_c-\tau$,  the SINR of user $k$ is denoted by $\text{SINR}^2_k$. Accordingly, the achievable rates of user $k$ in the above two phases  can be expressed as $R^1_{k}=\log_2( 1+ \text{SINR}^1_k )$ and $R^2_{k}=\log_2( 1+ \text{SINR}^2_k )$ in bps/Hz, respectively.   The average achievable rate of user $k$ over $T_c$ is thus given by $r_k=\frac{ \tau}{T_c} R^1_{k} + (1-\frac{\tau}{T_c}) R^2_{k}= \rho R^1_{k} + (1-\rho) R^2_{k}$.
Given the transmit power constraint at the AP, we aim to maximize the minimum achievable rate of $r_k$ among all the users, which is a dual problem of (P1) and thus can be solved by using the proposed alternating optimization algorithm together with an efficient bisection search over the AP transmit power  \cite{wiesel2006linear}.  For the case of massive MIMO, we adopt the optimal MMSE-based transmit  beamforming at the AP. }

\begin{figure*}[!t]
\begin{minipage}[t]{0.5\linewidth}
\centering
\includegraphics[width=3.3in, height=2.4in]{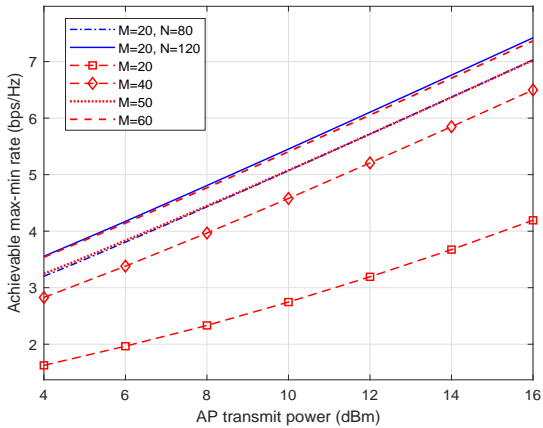}  
\caption{{Achievable max-min rate versus transmit power at AP.}          }
 \label{rate:power}
\end{minipage}%
~~~~\begin{minipage}[t]{0.5\linewidth}
\centering
\includegraphics[width=3.3in, height=2.4in]{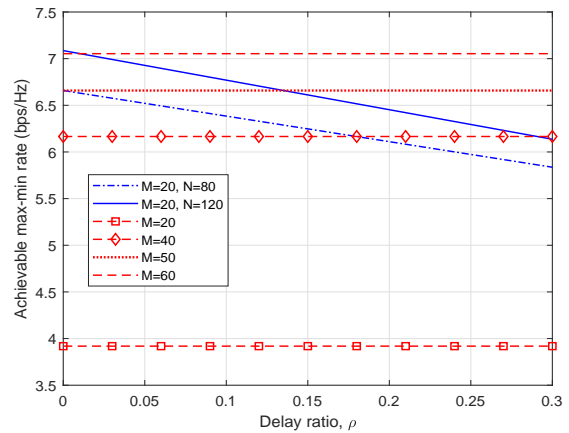}  
\caption{{Impact of IRS delay on  achievable max-min  rate.}      }
\label{rate:delay:tradeoff}
\end{minipage}\vspace{-4mm}
\end{figure*}
{In Fig. \ref{rate:power}, we show the achievable max-min  rate  versus the transmit power at the AP for the ideal case with negligible delay (i.e., $\rho=0$)  in the IRS-aided  system, which provides a throughput  upper bound for practical implementation.
 From Fig. \ref{rate:power}, it is observed that a hybrid deployment of an AP with $M=20$ active antennas and an IRS with $N=80$ passive reflecting elements  achieves nearly the same performance as deploying only an AP with $M=50$ but without using the IRS, which implies that the IRS is indeed effective in reducing the number of active antennas  required in conventional  massive MIMO. To take into account the practical delay due to IRS-AP coordination, we show in Fig. \ref{rate:delay:tradeoff} the achievable max-min rate versus the delay ratio $\rho $ by fixing  the transmit power at the AP as 15 dBm. It is observed  that as the delay ratio increases, the achievable rate of the IRS-aided system decreases since users are served by a relatively small MIMO system (with $M=20$) for more time before the IRS is activated for signal reflection. However, it is also  observed that even with a large  delay of $0.18T_c$, the IRS-aided system with $M=20$ and $N=80$ can still achieve better performance than the AP-only system with $M=40$. This validates the practical throughput gain  of IRS even by taking into account a moderate  delay for its coordination with the AP. }

\section{Conclusions}
In this paper, we proposed a novel  approach to enhance the spectrum and energy efficiency as well as reducing the implementation cost of future wireless communication systems  by leveraging the passive IRS via smartly adjusting its signal reflection. Specifically, given the user  SINR constraints, the active transmit beamforming at the AP and passive reflect beamforming  at the IRS were jointly optimized to minimize the transmit power in an IRS-aided multiuser system. By applying  the SDR and alternating optimization techniques,  efficient algorithms were proposed to trade off between the system performance and computational complexity.
It was shown for the single-user system   that
the receive SNR  increases quadratically with the number of reflecting elements of the IRS, which is more efficient  than the conventional massive MIMO or multi-antenna AF relay.  While for the multiuser system, it was shown that IRS-enabled  interference suppression can be jointly designed with the AP transmit beamforming to improve the performance of all users in the system, even for those that are far away from the IRS.
Extensive  simulation results under various practical setups  demonstrated that  by deploying the IRS and jointly optimizing its reflection with the AP transmission, the wireless network performance can be significantly improved in terms of energy saving, coverage extension as well as achievable rate,   as compared to the conventional setup without using the IRS such as the massive MIMO.  Useful  insights on optimally deploying the IRS and its delay-performance trade-off were also drawn to provide useful guidance for  practical design and implementation.

{In practice, if IRS is equipped with receive RF chains, the commonly used pilot-assisted channel estimation methods can be similarly applied to the IRS as shown in Section V-B; otherwise, it is infeasible for the IRS to directly estimate the channels with its associated  AP/users. For the latter (more challenging)  case,  a viable approach may be to design  the IRS's passive beamforming based on the feedback from the AP/users that receive the signals reflected by the IRS, which is worth investigating in the future work.
In addition, after this paper was submitted, we became aware of another parallel work \cite{huang2018large}, which shows that the IRS-aided wireless system is more energy-efficient than the conventional multi-antenna  AF relay system with HD operation. Although the spectrum efficiency can be further improved by using the FD AF relay, effective SIC is required,  which incurs additional energy consumption. As such, it is worthy of further comparing the energy efficiency of IRS with the FD AF relaying in future work.}
\bibliographystyle{IEEEtran}
\bibliography{IEEEabrv,mybib}

\end{document}